\newcommand\T{\rule{0pt}{3.0ex}}       
\DeclareMathOperator{\tr}{tr}
\def\ket#1{\mathinner{|{#1}\rangle}}
\def\coloneq{\mathrel{\mathop:}=}
\newtheorem{theorem}{Theorem}
\newtheorem{lemma}{Lemma}[section]
\newtheorem{proposition}{Proposition}[section]
\begin{document}
\title{Maximally entangled mixed states for qubit-qutrit systems}
\author{Paulo E. M. F. Mendon\c{c}a}\email{pmendonca@gmail.com}
\affiliation{Divis\~ao de Ensino, Academia da For\c{c}a A\'erea, 13.643-000 Pirassununga, SP, Brazil}
\author{Marcelo A. Marchiolli}\email{marcelo march@bol.com.br}
\affiliation{Avenida General Os\'orio 414, centro, 14.870-100 Jaboticabal, SP, Brazil}
\author{Samuel R. Hedemann}\email{samuel.hedemann@gmail.com}
\affiliation{P.O. Box 72, Freeland, MD 21053, USA}
\date{\today}
\begin{abstract}
We consider the problems of maximizing the entanglement negativity of X-form qubit-qutrit density matrices with (i) a fixed spectrum and (ii) a fixed purity. In the first case, the problem is solved in full generality whereas, in the latter, partial solutions are obtained by imposing extra spectral constraints such as rank-deficiency and degeneracy, which enable a semidefinite programming treatment for the optimization problem at hand. Despite the technically-motivated assumptions, we provide strong numerical evidence that three-fold degenerate X states of purity $P$ reach the highest entanglement negativity accessible to arbitrary qubit-qutrit density matrices of the same purity, hence characterizing a sparse family of likely qubit-qutrit maximally entangled mixed states.
\end{abstract}

\pacs{03.65.Ud, 03.67.Mn, 03.65.Aa}
\maketitle

\section{Introduction}

Pure bipartite entanglement has long been a well-characterized quantum resource. There is, in essence, a unique way to quantify the amount of entanglement in a pure bipartite quantum state --- the von Neumann entropy of either of its reduced density matrices~\cite{96Bennett3824,97Popescu3319,02Donald4252}. Accordingly, such a state is said to be \emph{maximally entangled} if it maximizes this ``entropy of entanglement,'' a condition met by states for which all nonzero Schmidt coefficients are equal (e.g., Bell states of two qubits~\cite{00Nielsen}).

In sharp contrast, mixed bipartite entanglement is quite a formidable subject~\cite{09Horodecki865}. Although a number of well-defined and physically motivated mixed entanglement measures have been identified~\cite{01Horodecki3}, they are usually hard to compute and do not always agree on which one of two given entangled states is more entangled~\cite{99Eisert145,00Virmani31}. Moreover, even the seemingly simple problem of deciding whether a bipartite mixed state is entangled or not has been shown to be NP-hard~\cite{03Gurvits10,07Ioannou335}. Nevertheless, mixed entanglement detection in two-qubit ($2\times 2$) and qubit-qutrit ($2\times 3$) systems can be carried out with relative ease thanks to a remarkable result due to Peres~\cite{96Peres1413} and Horodecki~\cite{96Horodecki1}: the occurrence of a negative eigenvalue in the partially transposed density matrix of a $2\times 2$ or $2\times 3$ system is a necessary and \emph{sufficient} condition for mixed-state entanglement. 

Building on this criterion, Vidal and Werner defined the \emph{entanglement negativity} (or simply, \emph{negativity}), as a measure corresponding to a certain affine function of the sum of the magnitudes of the negative eigenvalues of either partially transposed density matrix, establishing it as an easy-to-compute entanglement measure \cite{02Vidal032314}.
Indeed, negativity obeys a number of interesting properties such as being invariant under local unitaries, monotonic under local operations and classical communications, and, in the case of $2\times 2$ and $2\times 3$ systems, zero if and only if no entanglement is present~\cite{96Peres1413,96Horodecki1}.

Resorting to negativity and concurrence (another computable entanglement measure for $2\times 2$ systems~\cite{98Wootters2245,01Wootters27}), Ishizaka and Hiroshima~\cite{00Ishizaka22310} introduced the concept of \emph{maximally entangled mixed states} (MEMS) to describe those states that cannot have their entanglement content increased by means of a global unitary transformation, or equivalently, states that --- for a fixed spectrum --- maximize a given entanglement measure. In Refs.~\cite{00Ishizaka22310,01Verstraete12316}, explicit forms for $2\times 2$ MEMS with fixed spectrum and maximal concurrence, negativity, and relative entropy of entanglement~\cite{97Vedral4452} were obtained. Subsequently, the notion of MEMS with a fixed mixedness (as measured by purity, linear entropy, or von Neumann entropy) was investigated and, once again, explicit forms of such $2\times 2$ MEMS were constructed~\cite{01Munro30302,03Wei22110}. 

Despite these early findings on $2\times 2$ systems, the problem of identifying $2\times 3$ MEMS has been, so far, mostly overlooked by the quantum information community. To the best of our knowledge, the only attempt to provide an explicit construction of such optimal states was made in Ref.~\cite{13Hedemann}, where numerical experimentation and educated guesses (inspired by two-qubit results) paved the way to a family of candidate states. Yet, a rigorous determination of actual $2\times 3$ MEMS is still a wide-open problem. This paper aims to contribute important first steps towards settling this issue.

Throughout most of this work, we constrain our search for $2\times 3$ MEMS to the subset of X states~\cite{07Yu459} --- a subset of density matrices that, written in the computational basis, cannot display nonzero entries outside of the main- and anti-main-diagonals. That is for two reasons: First, from a technical viewpoint, the sparsity of X states allows the relevant maximizations of negativity to be carried out analytically in many circumstances. Second, from the aforementioned results on two-qubit systems~\cite{00Ishizaka22310,01Verstraete12316,01Munro30302,03Wei22110} we learned that $2\times 2$ MEMS can always be written in X form (in fact, any $2\times 2$ state can be \emph{unitarily transformed} into an X state of the same entanglement~\cite{13Hedemann,14Mendonca79}), and so the hypothesis that $2\times 3$ MEMS will \emph{also} fall within this subset feels like a reasonable place to start. Notwithstanding, since this is ultimately an unconfirmed hypothesis, we name X-MEMS (as opposed to simply MEMS) the optimal density matrices resulting from optimization problems constrained to output X states. Whether or not X-MEMS are actual MEMS for $2\times 3$ systems remains an open problem, but strong numerical evidence supporting this conjecture is provided.

A combined summary of our main results and the structure of the paper is as follows. In Sec.~\ref{sec:preliminar} we derive a negativity formula for $2\times 3$ X states and provide a brief review of certain elements of semidefinite programming theory to be employed throughout. Sec.~\ref{sec:XMEMSspectrum} is devoted to the determination of the maximal negativity of $2\times 3$ X states constrained to have a given set of eigenvalues, culminating with an analytical construction of $2\times 3$ X-MEMS with respect to (wrt) spectrum. In Sec.~\ref{sec:spconstrainedXMEMS}, we move on to consider the characterization of $2\times 3$ X states that maximize negativity for a fixed value of purity.  The main results of this section are analytical constructions of $2\times 3$ X-MEMS wrt purity under three different circumstances: constrained to be of rank-2, rank-3, and without rank constraints but instead requiring the smallest eigenvalue to be three-fold degenerate. Grounding on these results, we show in Sec.~\ref{sec:universality} that, unlike it occurs in $2\times 2$ systems, not every $2\times 3$ density matrix can be \emph{unitarily} transformed into a $2\times 3$ X state of same negativity. Nevertheless, in Sec.~\ref{sec:MEMS}, we present the results of a numerical algorithm that supports the conjecture that $2\times 3$ MEMS wrt purity can always be made three-fold degenerate X states. We come to our conclusions in Sec.~\ref{sec:conclusion}.

\section{Preliminaries}\label{sec:preliminar}

This section aims to provide some (nonstandard) background material to be used throughout the paper. In Sec.~\ref{sec:nexstates}, the concept of an X state is formally introduced and a negativity formula for $2\times 3$ X states --- particularly suitable for these optimizations to be carried out --- is derived. In addition, Sec.~\ref{sec:SDP} gives a quick insight on certain aspects of the duality theory for semidefinite programs. For a more thorough review on semidefinite programming, we refer the reader to Refs.~\cite{96Vandenberghe49,04Boyd,02Audenaert030302}.

\subsection{Negativity of entanglement of $\mathbf{2\times 3}$ X states}\label{sec:nexstates}

An X state is any quantum state whose density matrix, written in the computational basis, has only (potentially) nonzero entries along the main- and anti-main-diagonals, and zeros elsewhere, hence displaying a matrix form that resembles the alphabet letter `X'.

For an explicit construction in the case of $2\times 3$ systems, let $\{\ket{0},\ket{1}\}\coloneq \{(1\;0)^{\sf T},(0\;1)^{\sf T}\}$ and $\{\ket{0},\ket{1},\ket{2}\}\coloneq \{(1\;0\;0)^{\sf T},(0\;1\;0)^{\sf T},(0\;0\;1)^{\sf T}\}$ be the ordered computational basis sets and vector representations for the qubit and qutrit Hilbert spaces, respectively, where the adopted ket labels were arbitrarily chosen (not to be confused with Fock states). Then the computational basis for a $2\times 3$ system is given by $\{\ket{00}, \ket{01}, \ket{02}, \ket{10}, \ket{11}, \ket{12}\}\coloneq\{\ket{1},\ldots,\ket{6}\}$, where $\ket{ab}\coloneq\ket{a}\otimes\ket{b}$. Thus, in the matrix representation where $\rho_{j,k}\coloneq\langle j|\rho\ket{k}$, an arbitrary $2\times 3$ X-state density matrix can be parametrized as (using dots to represent zeroes)
\begin{equation}\label{eq:qubitqutritXstate}
\arraycolsep=1.38pt
\bm{\rho}_{\rm X}\!=\!\left[\!\begin{array}{cccccc}
a_1 & \cdot & \cdot & \cdot & \cdot & r_1 e^{-i \phi_1}\\
\cdot & a_2 & \cdot & \cdot & r_2 e^{-i \phi_2} & \cdot\\
\cdot & \cdot & a_3 & r_3 e^{-i \phi_3} & \cdot & \cdot\\
\cdot & \cdot & r_3 e^{i \phi_3}  & b_3 & \cdot & \cdot\\
\cdot & r_2 e^{i \phi_2} & \cdot  & \cdot &  b_2 & \cdot\\
r_1 e^{i \phi_1} & \cdot & \cdot  & \cdot &  \cdot & b_1\\
\end{array}\!\right]\!,
\end{equation}
where, for $k=1,2,3$, the parameters $a_k$, $b_k$ and $r_k$ are nonnegative real numbers, and the normalization and positive semidefiniteness of $\bm{\rho}_{\rm X}$ require that
\begin{equation}\label{eq:qualifiers}
\sum_{k=1}^3a_k+b_k=1\quad\mbox{and}\quad r_k\leq\sqrt{a_k b_k}.
\end{equation}

For what follows, it will be useful to characterize the spectrum and the negativity of an arbitrary $2\times 3$ X state. Firstly, a straightforward computation of the eigenvalues of $\bm{\rho}_{\rm X}$ [cf. Eq.~(\ref{eq:qubitqutritXstate})] gives
\begin{equation}\label{eq:lambdas}
\lambda_k^\pm=\frac{a_k+b_k}{2}\pm\sqrt{r_k^2+d_k^2}\quad \mbox{for}\quad k=1,2,3,
\end{equation}
where we have defined 
\begin{equation}
d_k\coloneq \frac{b_k-a_k}{2}.
\end{equation} 

We now seek a general formula for the negativity $\mathcal{N}(\bm{\rho}_{\rm X})$ of $2\times 3$ X states, amenable to the optimization problems to be considered in the forthcoming sections. To start, we can define the negativity of $\bm{\rho}_{\rm X}$ as\footnote{We follow Ref.~\cite{13Hedemann} and define negativity as twice its standard definition, so that the resulting measure of entanglement spans the interval $[0,1]$ in $2\times 3$ systems.}
\begin{equation}
\mathcal{N}(\bm{\rho}_{\rm X})\coloneq\|\bm{\rho}_{\rm X}^{\Gamma}\|_{\text{tr}}-1,
\end{equation}
where $\|\mathbf{A}\|_{\text{tr}}\coloneq \tr[\sqrt{\mathbf{A}^{\dag}\mathbf{A}}]$ is the trace norm of $\mathbf{A}$ and $\bm{\rho}_{\rm X}^{\Gamma}$ is the partial transpose of $\bm{\rho}_{\rm X}$ with respect to either subsystem. Since $\bm{\rho}_{\rm X}$ is Hermitian, we can get $\bm{\rho}_{\rm X}^{\Gamma}$ from~Eq.~(\ref{eq:qubitqutritXstate}) by performing $i\leftrightarrow -i$, $r_1 \leftrightarrow r_3$, and $\phi_1 \leftrightarrow \phi_3$, so its negativity is
\begin{equation}\label{eq:negrhox}
\mathcal{N}(\bm{\rho}_{\rm X})= \left(\,\sum_{k=1}^3|\lambda_k^{\prime +}| + |\lambda_k^{\prime -}|\right) - 1\,,
\end{equation}
where $\{\lambda_k^{\prime\pm}\}_{k=1}^3$ is the set of eigenvalues of $\bm{\rho}_{\rm X}^{\Gamma}$, where
\begin{equation}\label{eq:lambdaprime}
\lambda_k^{\prime\pm}=\frac{a_k+b_k}{2}\pm\sqrt{r_{4-k}^2+d_k^2}\,.
\end{equation}
Now, a few observations regarding the eigenvalues $\{\lambda_k^\pm\}_{k=1}^3$ and $\{\lambda_k^{\prime \pm}\}_{k=1}^3$ are useful:
\begin{itemize}[leftmargin=*]
\item The eigenvalues $\{\lambda_k^{\prime +}\}_{k=1}^3$ are nonnegative [cf. Eq.~(\ref{eq:lambdaprime})], so their absolute values appearing in Eq.~(\ref{eq:negrhox}) can be dropped to give
\begin{equation}\label{eq:negrhox2}
\mathcal{N}(\bm{\rho}_{\rm X})= \left(\,\sum_{k=1}^3\lambda_k^{\prime +} + |\lambda_k^{\prime -}|\right) - 1\,.
\end{equation}
\item The eigenvalues $\lambda_2^{\prime \pm}$ are identical to $\lambda_2^{\pm}$ [cf. Eqs.~(\ref{eq:lambdas}) and (\ref{eq:lambdaprime})] and are thus nonnegative too, so their primes and absolute values appearing in Eq.~(\ref{eq:negrhox2}) can be dropped to give 
\begin{equation}\label{eq:negrhox3}
\mathcal{N}(\bm{\rho}_{\rm X})= \lambda_1^{\prime +} + |\lambda_1^{\prime -}| + \lambda_3^{\prime +} + |\lambda_3^{\prime -}| + \lambda_2^{+} + \lambda_2^{-} - 1\,.
\end{equation}

\item Owing to the normalization condition of $\bm{\rho}_{\rm X}$, the term $\lambda_2^{+} + \lambda_2^{-} - 1$ appearing in Eq.~(\ref{eq:negrhox3}) can be rewritten as the negative of the sum of the remaining eigenvalues of $\bm{\rho}_{\rm X}$, which yields

\begin{equation}\label{eq:negrhox4}
\mathcal{N}(\bm{\rho}_{\rm X}){\kern -0.5pt}={\kern -0.5pt} \lambda_1^{\prime +} {\kern -0.5pt}+{\kern -0.5pt} |\lambda_1^{\prime -}| {\kern -0.5pt}+{\kern -0.5pt} \lambda_3^{\prime +} {\kern -0.5pt}+{\kern -0.5pt} |\lambda_3^{\prime -}| {\kern -0.5pt}-{\kern -0.5pt}\lambda_1^{+} {\kern -0.5pt}-{\kern -0.5pt} \lambda_1^{-} {\kern -0.5pt}-{\kern -0.5pt} \lambda_3^{+} {\kern -0.5pt}-{\kern -0.5pt} \lambda_3^{-} .
\end{equation}

\item As demonstrated in App.~\ref{app:onenegeigenvalue}, $2\times 3$ X states have at most one negative eigenvalue in their partial transpose; either $\lambda_1^{\prime -}$ or $\lambda_3^{\prime -}$. Without loss of generality, henceforth we assume $\bm{\rho}_{\rm X}$ to be an entangled X state with $\lambda_1^{\prime -}<0$ and $\lambda_3^{\prime -}\geq 0$ and, accordingly, we rewrite Eq.~(\ref{eq:negrhox4}) as
\begin{equation}\label{eq:negrhox5}
\mathcal{N}(\bm{\rho}_{\rm X})= \lambda_1^{\prime +} - \lambda_1^{\prime -} + \lambda_3^{\prime +} + \lambda_3^{\prime -} -\lambda_1^{+} - \lambda_1^{-} - \lambda_3^{+} - \lambda_3^{-} .
\end{equation}
\item From Eqs.~(\ref{eq:lambdas}) and (\ref{eq:lambdaprime}), notice that $\lambda_3^{\prime +} + \lambda_3^{\prime -} = \lambda_3^{+} + \lambda_3^{-}$, so Eq.~(\ref{eq:negrhox5}) simplifies to
\begin{equation}\label{eq:negrhox6}
\mathcal{N}(\bm{\rho}_{\rm X})= \lambda_1^{\prime +} - \lambda_1^{\prime -}  -\lambda_1^{+} - \lambda_1^{-} .
\end{equation}
\item Equation~(\ref{eq:lambdaprime}) implies that
\begin{equation}\label{eq:lambda1primediff}
\lambda_1^{\prime +}-\lambda_1^{\prime -} = 2\sqrt{r_3^2+d_1^2}\,,
\end{equation}
and, according to Eq.~(\ref{eq:lambdas}), the quantities $r_3^2$ and $d_1^2$ admit the expressions
\begin{align}
r_3^2&=\left(\frac{\lambda_3^+-\lambda_3^-}{2}\right)^2-d_3^2\,,\\
 d_1^2&=\left(\frac{\lambda_1^+-\lambda_1^-}{2}\right)^2-r_1^2\,,
\end{align}
which, when plugged back into Eq.~(\ref{eq:lambda1primediff}), lead to the final form for $\mathcal{N}(\bm{\rho}_{\rm X})$ as
\begin{widetext}
\begin{equation}\label{eq:Nrhoxfinal}
\mathcal{N}(\bm{\rho}_{\rm X})=-\lambda_1^+-\lambda_1^-+\sqrt{(\lambda_1^+-\lambda_1^-)^2+(\lambda_3^+-\lambda_3^-)^2-4d_3^2-4r_1^2}.
\end{equation}
\end{widetext}
\end{itemize}
Equation~(\ref{eq:Nrhoxfinal}) will be the starting point for the maximization of the negativity of entanglement of a $2\times 3$ X state of fixed spectrum, to be carried out in Sec.~\ref{sec:XMEMSspectrum}.

\subsection{Semidefinite Programming}\label{sec:SDP}

A semidefinite program (SDP) is an optimization problem that admits the following \emph{inequality form}:
\begin{equation}\label{eq:SDPineqform}
\mbox{minimize }\bm{c}^{\sf T}\bm{x}\quad\mbox{such that } \bm{F}(\bm{x})\coloneq \bm{F}_0+\sum_{i=1}^n \bm{F}_i x_i \geq 0\,,
\end{equation}
where $\bm{c}\in\mathbb{R}^n$ is a given vector and $\{\bm{F}_i\}_{i=0}^n$ are given $d$-dimensional Hermitian matrices. The problem variables are $\{x_i\}_{i=1}^n$, which form the variable vector $\bm{x}$. If $\{\bm{F}_i\}_{i=0}^n$ are diagonal matrices, then the requirement $\bm{F}(\bm{x})\geq 0$ is equivalent to the set of linear inequalities $[\bm{F}(\bm{x})]_{jj}\geq 0$, for $j=1,\ldots,d$. In this particular case, SDPs reduce to simple linear programs. If, on the other hand, not all $\{\bm{F}_i\}_{i=0}^n$ are diagonal, then the \emph{linear matrix inequality} (LMI) $\bm{F}(\bm{x})\geq 0$ expresses the requirement that $\bm{F}(\bm{x})$ be positive semidefinite (hence, the term SDP), being thus equivalent to a set of nonlinear inequalities that guarantee the nonnegativity of each eigenvalue of $\bm{F}(\bm{x})$ [e.g., those enforcing all principal minors of $\bm{F}(\bm{x})$ to be nonnegative]. An SDP is, in this sense, a (nonlinear) generalization of a linear program.

Despite the implicit nonlinearity in the LMI, SDPs are still convex optimization problems because the solution set of $\bm{F}(\bm{x})\geq 0$ is convex~\cite{04Boyd} (and the objective function is linear). As a result, any local optimum of an SDP is automatically a global optimum. SDPs thus constitute a special family of nonlinear optimization problems that can be efficiently solved.

An important counterpart of problem (\ref{eq:SDPineqform}) is the optimization problem,
\begin{equation}\label{eq:SDPdual}
\mbox{maximize }-\tr[\bm{F}_0 \bm{Z}]\quad\mbox{such that }
\left\{\begin{array}{rcl}
 \bm{Z}&\geq& 0\,,\\ 
 \tr[\bm{F}_i\bm{Z}]&=&c_i\quad \forall i\,,
\end{array}\right.
\end{equation}
to be solved for the matrix variable $\bm{Z}$. Problem~(\ref{eq:SDPdual}) is the \emph{dual problem} of problem (\ref{eq:SDPineqform}) which, in turn, is referred to as the \emph{primal problem}. Remarkably, by expanding $\bm{Z}$ in a Hermitian basis and solving the linear system  $\{\tr[\bm{F}_i\bm{Z}]=c_i\}_{i=1}^n$ for the relevant expansion coefficients, the determination of the remaining coefficients can be cast as an optimization problem in the inequality form, meaning that the dual of an SDP is another SDP.

An important property to be explored throughout is that the solution of the dual problem sets a lower bound for the solution of the primal problem. In order to see that, let $\mathfrak{p}^\ast$ and $\mathfrak{d}^\ast$ denote the \emph{optimal values} of the primal and dual objective functions, respectively. Then
\begin{eqnarray}
\mathfrak{d}^\ast-\mathfrak{p}^\ast &=& -\tr[\bm{F}_0 \bm{Z}_{\text{opt}}]-\bm{c}^{\sf{T}}\bm{x}_{\text{opt}}\nonumber\\
&=&-\tr[\bm{F}_0 \bm{Z}_{\text{opt}}]-\sum_{i=1}^n\tr[\bm{F}_i \bm{Z}_{\text{opt}}]x_{i,\text{opt}}\nonumber\\
&=& -\tr\left[\left(\bm{F}_0 + \sum_{i=1}^n\bm{F}_i x_{i,\text{opt}} \right) \bm{Z}_{\text{opt}}\right]\nonumber\\
&=&-\tr\left[\bm{F}(\bm{x}_{\text{opt}}) \bm{Z}_{\text{opt}}\right],
\end{eqnarray}
where $\bm{x}_{\text{opt}}$ and $\bm{Z}_{\text{opt}}$ denote optimal choices for the variables $\bm{x}$ and $\bm{Z}$, respectively. Since both $\bm{F}(\bm{x}_{\text{opt}})$ and $\bm{Z}_{\text{opt}}$ are required to be nonnegative [cf. the constraints of the primal and dual problems], it follows that $\mathfrak{d}^\ast\leq\mathfrak{p}^\ast$. Moreover, since the primal problem is a minimization and the dual problem is a maximization, we arrive at the \emph{weak duality} property for SDPs,
\begin{equation}\label{eq:weakduality}
\mathfrak{d}\leq\mathfrak{d}^\ast\leq\mathfrak{p}^\ast\leq\mathfrak{p}\,,
\end{equation}
where $\mathfrak{d}$ and $\mathfrak{p}$ denote values for the dual and primal objective functions computed from choices of $\bm{Z}$ and $\bm{x}$ that fulfill the problem's constraints, but are not required to be optimal ones. In contrast with the optimal values $\mathfrak{d}^\ast$ and $\mathfrak{p}^\ast$, the numbers $\mathfrak{d}$ and $\mathfrak{p}$ are commonly referred to as \emph{feasible values}.

Inequality~(\ref{eq:weakduality}) enables a handy strategy to confirm the optimality of a candidate solution of an SDP: Suppose we know \emph{feasible points} $\widetilde{\bm{Z}}$ and $\widetilde{\bm{x}}$ for which the corresponding feasible values satisfy $\mathfrak{d}=\mathfrak{p}$. Then, according to (\ref{eq:weakduality}), $\mathfrak{p}^\ast=\mathfrak{p}$ also, which implies that $\widetilde{\bm{x}}$ is actually an optimal point of the primal problem. Of course, it may be difficult to find dual and primal feasible points such that $\mathfrak{d}=\mathfrak{p}$, but to that purpose one can rely upon all sorts of unorthodox methods to approach the dual and primal optimization problems (e.g., input from numerical experiments and educated guesses). If, at the end, the identity $\mathfrak{d}=\mathfrak{p}$ holds, then all the dubious steps can be forgotten and the optimality can be rigorously claimed. As we shall see, three important theorems stated in this paper have been proved in this way.

\section{$\mathbf{2\times 3}$ X-MEMS wrt spectrum}\label{sec:XMEMSspectrum}

What is the $2\times 3$ X state of a given (but arbitrary) spectrum $\Lambda$ that reaches maximal negativity of entanglement? This section is devoted to answer this question and, in so doing, to establish a family of $2\times 3$ X-MEMS wrt spectrum.

Formally, the aforementioned question can be cast as the optimization problem

\begin{equation}\label{eq:optspectrum1}
\mbox{maximize }\mathcal{N}(\bm{\rho}_{\rm X})\quad\mbox{such that } \{\lambda_k^\pm\}_{k=1}^3=\Lambda,
\end{equation}
where $\Lambda\coloneq\{\lambda_j\}_{j=1}^6$ denotes any chosen spectrum with elements $\lambda_1\geq\lambda_2\geq\lambda_3\geq\lambda_4\geq\lambda_5\geq\lambda_6$. Moreover, the figure-of-merit $\mathcal{N}(\bm{\rho}_{\rm X})$ is given by Eq.~(\ref{eq:Nrhoxfinal}) and the maximization runs over the set of variables $\{\lambda_k^\pm\}_{k=1}^3\cup\{d_3,r_1\}$. Admittedly, a few constraints have been omitted from the optimization problem~(\ref{eq:optspectrum1}), but as we now argue, they do not influence the solution of the problem. 

First, we should have imposed that $\lambda_k^+\geq\lambda_k^-$ for every $k\in\{1,2,3\}$, as implied by Eq.~(\ref{eq:lambdas}). However, $\mathcal{N}(\bm{\rho}_{\rm X})$ is invariant under exchanges $\lambda_k^+\leftrightarrow\lambda_k^-$ [cf. Eq.~(\ref{eq:Nrhoxfinal})], so the inequality constraints $\lambda_k^+\geq\lambda_k^-$ can be ignored \emph{a priori} and accommodated \emph{a posteriori}. Secondly, the variable subset $\{d_3,r_1\}$ is not independent from the variable subset $\{\lambda_k^\pm\}_{k=1}^3$ [cf. Eq.~(\ref{eq:lambdas})], thus their exact dependencies should have appeared as additional constraints in the optimization problem~(\ref{eq:optspectrum1}). By relaxing such constraints (i.e., regarding $d_3$ and $r_1$ as independent variables), the optimal value arising from problem~(\ref{eq:optspectrum1}) will be, in principle, an upper bound on the maximal negativity associated with the spectrum $\Lambda$, which we call $N_{{\rm X},\Lambda}$. We defer until the end of this section an explicit verification that $N_{{\rm X},\Lambda}$ is, however, a \emph{tight} upper bound.

Regarding $d_3$ and $r_1$ as independent variables, from Eq.~(\ref{eq:Nrhoxfinal}) we can see that $\mathcal{N}(\bm{\rho}_{\rm X})$ is maximized if $d_3=r_1=0$, in which case problem~(\ref{eq:optspectrum1}) reduces to

\begin{align}
\mbox{maximize}&\quad -\lambda_1^+-\lambda_1^-+\sqrt{\left(\lambda_1^+-\lambda_1^-\right)^2+\left(\lambda_3^+-\lambda_3^-\right)^2}\nonumber\\
\mbox{such that}&\quad \{\lambda_k^\pm\}_{k=1}^3=\{\lambda_j\}_{j=1}^6\,. \label{eq:proboptdist}
\end{align}

Applying Proposition~\ref{proposition1} (stated and proved in App.~\ref{app:discopt}), the optimal solution for this problem (complying with $\lambda_k^+\geq \lambda_k^-$) can be promptly found to be
\begin{align}\label{eq:onlyoptsol}
\lambda_1^+=\lambda_4\,,&\quad \lambda_1^-=\lambda_6\,,\nonumber\\
\lambda_2^+=\lambda_2\,,&\quad \lambda_2^-=\lambda_3\,,\\
\lambda_3^+=\lambda_1\,,&\quad \lambda_3^-=\lambda_5\,,\nonumber
\end{align}
thus establishing 
\begin{equation}\label{eq:defNLambdaub}
\mathcal{N}(\bm{\rho}_{\rm X})\leq N_{{\rm X},\Lambda}=-\lambda_4-\lambda_6+\!\sqrt{(\lambda_4-\lambda_6)^2+(\lambda_1-\lambda_5)^2}
\end{equation}
for an arbitrary $2\times 3$ X state $\bm{\rho}_{\rm X}$ with spectrum $\Lambda$.

In order to see that $N_{{\rm X},\Lambda}$ is a tight upper bound,  consider the following X state of spectrum $\Lambda$,

\begin{equation}\label{eq:egXMEMSwrtspec}
\bm{\varrho}_{\rm X}=\frac{1}{2}\left[
\begin{array}{cccccc}
2\lambda_4 & \cdot & \cdot & \cdot & \cdot & \cdot \\
\cdot & 2\lambda_2 & \cdot & \cdot & \cdot & \cdot \\
\cdot & \cdot & \lambda_1+\lambda_5 & \lambda_1-\lambda_5 & \cdot & \cdot\\
\cdot & \cdot & \lambda_1-\lambda_5 & \lambda_1+\lambda_5 & \cdot & \cdot\\
\cdot & \cdot & \cdot & \cdot & 2\lambda_3 & \cdot\\
\cdot & \cdot & \cdot & \cdot & \cdot & 2\lambda_6\\
\end{array}
\right]\!,
\end{equation}
whose negativity can be written as
\begin{equation}
\mathcal{N}(\bm{\varrho}_{\rm X})=\max\left[0,N_{{\rm X},\Lambda}\right]\,.
\end{equation}
Unless $N_{{\rm X},\Lambda}\leq 0$ (which matches the PPT condition~\cite{96Peres1413,96Horodecki1} and the condition for ``separability from spectrum'' for $2\times 3$ states~\cite{07Hildebrand052325}), the state $\bm{\varrho}_{\rm X}$ is entangled and has negativity $N_{{\rm X},\Lambda}$.

To sum up, we have established $N_{{\rm X},\Lambda}$ [cf. Eq.~(\ref{eq:defNLambdaub})] as the maximal negativity attainable by any $2\times 3$ X state of spectrum $\Lambda$, and  $\bm{\varrho}_{\rm X}$ [cf. Eq.~(\ref{eq:egXMEMSwrtspec})] as a family of $2\times 3$ X states --- parametrized by their spectrum $\Lambda$ --- that reach the maximal negativity $N_{{\rm X},\Lambda}$. Therefore, $\bm{\varrho}_{\rm X}$  represents a realization of the so-called $2\times 3$ X-MEMS wrt spectrum.

\section{Spectrum-Constrained $\mathbf{2\times 3}$ X-MEMS wrt purity}\label{sec:spconstrainedXMEMS}

Since a spectrum determines a value of purity $P$, every X-MEMS wrt $P$ must be an X-MEMS wrt a spectrum that realizes that $P$. Therefore, X-MEMS wrt to $P$ can be found by searching strictly on the set of X-MEMS wrt spectra that realize $P$, by solving the optimization problem 
\begin{align}
\mbox{maximize}&\quad -\lambda_4-\lambda_6+\sqrt{\left(\lambda_4-\lambda_6\right)^2+\left(\lambda_1-\lambda_5\right)^2}\nonumber\\
\mbox{such that}&\quad \lambda_{6}\geq 0\,,\quad\sum_{i=1}^{6}\lambda_i=1\,,\quad \sum_{i=1}^{6}\lambda_i^2\leq P\,,\label{eq:optgeneral}
\end{align}
on the \emph{variables} $\lambda_1\geq\lambda_2\geq\lambda_3\geq\lambda_4\geq\lambda_5\geq\lambda_6\geq 0$ and for a fixed $P\in\,]\frac{1}{5},1[$. 

A few observations concerning the optimization problem~(\ref{eq:optgeneral}) are relevant. First, although $2\times 3$ mixed states span the purity interval $[1/6,1[$, we disregard the range $[1/6,1/5]$ because these highly mixed states are known to be separable~\cite{98Zyczkowski883}. Second, there is no need to further constrain problem~(\ref{eq:optgeneral}) with $\lambda_1\geq\lambda_2\geq\lambda_3\geq\lambda_4\geq\lambda_5\geq\lambda_6$ because, owing to the nature of the figure-of-merit (and to the invariance of the last two constraints under $\lambda_i\leftrightarrow\lambda_j$), this ordering will be fulfilled even unimposed (cf. Proposition~\ref{proposition1}). Third, rather than the physically motivated purity constraint $\sum_{i=1}^{6}\lambda_i^2= P$, we have required $\sum_{i=1}^{6}\lambda_i^2\leq P$. Since the maximum of a convex function relative to a convex set generally occurs at some extreme point~\cite[Chapter 32]{70TyrrelRockafellar}, enlarging the feasible region to its convex hull does not affect the solution of the problem and brings technical advantages towards finding its optimal solution.

Despite the convexity of the feasible set, problem (\ref{eq:optgeneral}) is not a convex optimization problem because it aims to \emph{maximize} (as opposed to minimize) a convex objective function. In this section, we avoid the difficulties of looking for global optima in nonconvex problems and, instead, introduce three extra sets of spectral constraints that render the objective function linear and the resulting optimization problems convex\footnote{The maximization of a concave objective function over a convex set is a convex optimization problem. By making our objective function linear, it becomes simultaneously convex and concave, so that the resulting problem is a special type of convex optimization problem.}; namely (i) $\lambda_3=\lambda_4=\lambda_5=\lambda_6=0$, (ii) $\lambda_4=\lambda_5=\lambda_6=0$ and (iii) $\lambda_4=\lambda_5=\lambda_6$. Due to these added constraints, the solutions of the resulting optimization problems do not necessarily produce maximally entangled mixed states of purity $P$ amongst all $2\times 3$ X states. Instead, they yield maximally entangled mixed X states of purity $P$ \emph{and} (i) rank 2, (ii) rank 3, and (iii) three-fold degeneracy of the smallest eigenvalue, respectively. For now, we shall content ourselves with this limited scenario inasmuch as the added constraints are physically meaningful and the resulting states are of practical and theoretical interest. 

In what follows, we state three theorems that analytically characterize the three aforementioned families of maximal X states. The proofs are in App.~\ref{app:XMEMSwrtP}.

\begin{theorem}\label{thm:rank2} The maximal negativity achievable by a $2\times 3$ rank-2 X state of purity $P~\in~[\frac{1}{2},1[$ is 
\begin{equation}\label{eq:NXP2}
N_{{\rm X},P}^{(2)}=\frac{1}{2}(1+f_P),
\end{equation}
where
\begin{equation}\label{eq:fP}
f_P\coloneq\sqrt{2P-1}\,.
\end{equation} 
The density matrix
\begin{equation}\label{eq:rhoXP2}
\bm{\varrho}_{{\rm X},P}^{(2)}=
\frac{1}{2}\left[
\begin{array}{cccccc}
\cdot & \cdot & \cdot & \cdot & \cdot & \cdot \\
\cdot & 2\widetilde{\lambda}_2 & \cdot & \cdot & \cdot & \cdot \\
\cdot & \cdot & \widetilde{\lambda}_1 & \widetilde{\lambda}_1 & \cdot & \cdot\\
\cdot & \cdot & \widetilde{\lambda}_1 & \widetilde{\lambda}_1 & \cdot & \cdot\\
\cdot & \cdot & \cdot & \cdot & \cdot & \cdot\\
\cdot & \cdot & \cdot & \cdot & \cdot & \cdot\\
\end{array}
\right]\!,
\end{equation}
with 
\begin{equation}\label{eq:lambda1lambda2rank2}
\widetilde{\lambda}_1=\frac{1}{2}(1+f_P)\quad\mbox{and}\quad\widetilde{\lambda}_2=\frac{1}{2}(1-f_P),
\end{equation}
gives a construction of such a maximal state; that is $\mathcal{N}[\bm{\varrho}_{{\rm X},P}^{(2)}]=N_{{\rm X},P}^{(2)}$.
\end{theorem}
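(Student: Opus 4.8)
The plan is to solve the underlying spectral optimization directly, exploiting the fact that the rank-$2$ hypothesis trivializes the objective. First I would impose the constraint $\lambda_3=\lambda_4=\lambda_5=\lambda_6=0$ in the master problem~(\ref{eq:optgeneral}). Substituting this into the figure-of-merit of Eq.~(\ref{eq:defNLambdaub}) kills every term carrying $\lambda_4,\lambda_5,\lambda_6$ and collapses the objective to the single surviving eigenvalue $\lambda_1$. The problem thereby reduces to maximizing $\lambda_1$ over the two variables $\lambda_1\geq\lambda_2\geq 0$ subject to the normalization $\lambda_1+\lambda_2=1$ and the relaxed purity bound $\lambda_1^2+\lambda_2^2\leq P$.

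Next I would eliminate $\lambda_2=1-\lambda_1$ via normalization, so that the purity constraint becomes the single quadratic inequality $2\lambda_1^2-2\lambda_1+(1-P)\leq 0$. Because the objective $\lambda_1$ is increasing, its maximizer is the largest admissible $\lambda_1$, namely the larger root of $2\lambda_1^2-2\lambda_1+(1-P)=0$, which is $\tfrac{1}{2}\bigl(1+\sqrt{2P-1}\bigr)=\tfrac{1}{2}(1+f_P)$. This both shows that the purity constraint is active at the optimum and yields the claimed optimal value $N_{{\rm X},P}^{(2)}=\tfrac{1}{2}(1+f_P)$. Along the way I would check the inequalities that were suppressed in~(\ref{eq:optgeneral}): the ordering $\lambda_1\geq\lambda_2\geq 0$ holds because $f_P\in[0,1)$ on $P\in[\tfrac{1}{2},1[$, and the lower endpoint $P=\tfrac12$ is exactly the least purity attainable at rank $2$ (equal eigenvalues $\lambda_1=\lambda_2=\tfrac12$), which accounts for the stated purity range.

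Finally, to guarantee the bound is met by a genuine state and not merely an admissible spectrum, I would verify the explicit construction $\bm{\varrho}_{{\rm X},P}^{(2)}$ of Eq.~(\ref{eq:rhoXP2}). Matching its entries against the parametrization~(\ref{eq:qubitqutritXstate}) gives $a_2=\widetilde{\lambda}_2$, $a_3=b_3=r_3=\widetilde{\lambda}_1/2$, $\phi_3=0$, and all remaining parameters zero; feeding these into Eq.~(\ref{eq:lambdas}) returns the spectrum $\{\widetilde{\lambda}_1,\widetilde{\lambda}_2,0,0,0,0\}$, so the state has rank $2$ and purity $\widetilde{\lambda}_1^2+\widetilde{\lambda}_2^2=P$, while Eq.~(\ref{eq:Nrhoxfinal}), evaluated with $d_3=r_1=0$ and $\lambda_1^\pm=0$, gives negativity $\widetilde{\lambda}_1=\tfrac{1}{2}(1+f_P)$. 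The upper bound together with this matching construction establishes the theorem.

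In truth there is no serious obstacle in the rank-$2$ case: the rank constraint reduces the objective to a single eigenvalue and the optimization is a one-line quadratic. The only care required is bookkeeping---confirming that every inequality omitted from~(\ref{eq:optgeneral}) is respected at the optimum, and checking that the nontrivial $2\times 2$ central block of~(\ref{eq:rhoXP2}) does realize the optimal sorted spectrum. By contrast, I would expect the semidefinite-programming weak-duality machinery of Sec.~\ref{sec:SDP} to become genuinely indispensable only for the rank-$3$ and three-fold-degenerate theorems, where the objective no longer collapses to a single eigenvalue.
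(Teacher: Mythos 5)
Your proposal is correct, and it diverges from the paper's proof precisely where it matters: the solution of the reduced optimization. Both you and the paper perform the same initial reduction --- impose $\lambda_3=\lambda_4=\lambda_5=\lambda_6=0$ in problem~(\ref{eq:optgeneral}), note that the objective collapses to $\lambda_1$, and eliminate $\lambda_2=1-\lambda_1$ via normalization --- but from there the paper recasts the quadratic constraint as a linear matrix inequality via a Schur complement, obtains the SDP~(\ref{eq:optrnk2_step3}), and certifies optimality of $\widetilde{\lambda}_1$ through weak duality, exhibiting explicit dual feasible matrices $\widetilde{\bm{Z}}$ (including an asymptotic certificate $\mathfrak{z}\to\infty$ for the boundary case $P=\tfrac{1}{2}$). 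You instead observe that the feasible set $\{\lambda_1 : 2\lambda_1^2-2\lambda_1+(1-P)\leq 0,\ \lambda_1\leq 1\}$ is an interval whose right endpoint is the larger root $\tfrac{1}{2}(1+f_P)$, so maximizing the linear objective is immediate; this is a complete and rigorous argument, and it additionally makes transparent that the purity constraint is active at the optimum and that the suppressed ordering constraints hold. Your approach buys brevity and self-containedness --- for a one-variable QCLP the duality machinery is overkill, as you correctly sense --- while the paper's approach buys uniformity: the same Schur-complement-plus-dual-certificate template is reused verbatim for Theorems~\ref{thm:rank3} and~\ref{thm:deg}, where the objective no longer reduces to a single eigenvalue and an endpoint argument is unavailable. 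One further difference in your favor: where the paper disposes of the achievability claim by citing that Eq.~(\ref{eq:rhoXP2}) is Eq.~(\ref{eq:egXMEMSwrtspec}) specialized to the optimal spectrum (leaning on Sec.~\ref{sec:XMEMSspectrum}), you verify achievability directly from the X-state parametrization and the negativity formula~(\ref{eq:Nrhoxfinal}), which is a harmless redundancy that makes the proof stand alone.
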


\begin{theorem}\label{thm:rank3} The maximal negativity achievable by a $2\times 3$ rank-3 X state of purity $P~\in~[\frac{1}{3},1[$ is 
\begin{equation}\label{eq:NXP3}
N_{{\rm X},P}^{(3)}=\frac{1}{3}(1+g_P),
\end{equation}
where
\begin{equation}
g_P\coloneq\sqrt{6P-2}\,.
\end{equation} 
The density matrix
\begin{equation}\label{eq:rhoXP3}
\bm{\varrho}_{{\rm X},P}^{(3)}=
\frac{1}{2}\left[
\begin{array}{cccccc}
\cdot & \cdot & \cdot & \cdot & \cdot & \cdot \\
\cdot & 2\widetilde{\lambda}_2 & \cdot & \cdot & \cdot & \cdot \\
\cdot & \cdot & \widetilde{\lambda}_1 & \widetilde{\lambda}_1 & \cdot & \cdot\\
\cdot & \cdot & \widetilde{\lambda}_1 & \widetilde{\lambda}_1 & \cdot & \cdot\\
\cdot & \cdot & \cdot & \cdot & 2\widetilde{\lambda}_3 & \cdot\\
\cdot & \cdot & \cdot & \cdot & \cdot & \cdot\\
\end{array}
\right]\!,
\end{equation}
with 
\begin{equation}\label{eq:lambda1lambda2rank3}
\widetilde{\lambda}_1=\frac{1}{3}(1+g_P)\quad\mbox{and}\quad\widetilde{\lambda}_2=\widetilde{\lambda}_3=\frac{1}{6}(2-g_P),
\end{equation}
gives a construction of such a maximal state; that is $\mathcal{N}[\bm{\varrho}_{{\rm X},P}^{(3)}]=N_{{\rm X},P}^{(3)}$.
\end{theorem}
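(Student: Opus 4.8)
The plan is to exploit the fact that the rank-3 constraint linearizes the figure of merit, turning problem~(\ref{eq:optgeneral}) into a convex program whose optimality can be certified by the weak-duality argument of Sec.~\ref{sec:SDP}. Recall that, by the reduction opening Sec.~\ref{sec:spconstrainedXMEMS}, the maximal negativity of a rank-3 X state of purity $P$ equals the maximum of $N_{{\rm X},\Lambda}$ over rank-3 spectra realizing $P$. First I would substitute $\lambda_4=\lambda_5=\lambda_6=0$ into Eq.~(\ref{eq:defNLambdaub}): the prefactor $-\lambda_4-\lambda_6$ vanishes and the radicand collapses to $(\lambda_1-\lambda_5)^2=\lambda_1^2$, so the objective reduces to the linear function $\lambda_1$. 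The problem becomes: maximize $\lambda_1$ over $\lambda_1\geq\lambda_2\geq\lambda_3\geq 0$ subject to $\lambda_1+\lambda_2+\lambda_3=1$ and $\lambda_1^2+\lambda_2^2+\lambda_3^2\leq P$. Writing the purity constraint as an LMI through a Schur complement renders this a bona fide SDP; since its objective is linear (hence simultaneously convex and concave), it is a convex program.

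Next I would exhibit the candidate spectrum of Eq.~(\ref{eq:lambda1lambda2rank3}) as a primal feasible point: one checks $\widetilde{\lambda}_1+2\widetilde{\lambda}_2=1$, that the purity bound is \emph{saturated}, and that $\widetilde{\lambda}_1\geq\widetilde{\lambda}_2=\widetilde{\lambda}_3\geq 0$ holds throughout $P\in[\frac13,1[$ --- the endpoint $P=\frac13$ giving $g_P=0$ and the full degeneracy $\widetilde{\lambda}_1=\widetilde{\lambda}_2=\widetilde{\lambda}_3=\frac13$, while $P\to 1$ gives $g_P\to 2$ with $\widetilde{\lambda}_2\to 0$. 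This point attains objective value $\widetilde{\lambda}_1=\frac13(1+g_P)$, matching the claimed $N_{{\rm X},P}^{(3)}$ of Eq.~(\ref{eq:NXP3}).

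The crux is certifying that no rank-3 X state of purity $P$ does better. Because the program is convex with a linear objective, the KKT (equivalently, SDP dual-feasibility) conditions are sufficient for global optimality, so I would construct the matching dual point. With the nonnegativity constraints inactive at the optimum (all $\widetilde{\lambda}_i>0$ for $P\in[\frac13,1[$), only the normalization (multiplier $\mu$) and purity (multiplier $\nu\geq 0$) constraints bind, and stationarity reads $(-1,0,0)=\mu(1,1,1)+2\nu(\lambda_1,\lambda_2,\lambda_3)$; its second and third components immediately force $\lambda_2=\lambda_3$, whereupon the saturated constraint $\lambda_1^2+2\lambda_2^2=P$ with $\lambda_2=(1-\lambda_1)/2$ yields $3\lambda_1^2-2\lambda_1+(1-2P)=0$, whose larger root is exactly $\frac13(1+g_P)$. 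Feeding these multipliers back into the dual objective of~(\ref{eq:SDPdual}) should reproduce this value, so that weak duality~(\ref{eq:weakduality}) pins $\mathfrak{p}^\ast=\mathfrak{p}$ and certifies optimality. I expect the main obstacle to be guessing the correct dual matrix $\bm{Z}$ and verifying $\mathfrak{d}=\mathfrak{p}$ cleanly; the stationarity shortcut above is what makes the guess tractable, since it pins down both the active constraints and the degeneracy $\lambda_2=\lambda_3$ that the dual must encode.

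Finally I would assemble the explicit maximizer: inserting the optimal spectrum (with $\lambda_1=\frac13(1+g_P)$, $\lambda_2=\lambda_3=\frac16(2-g_P)$, and $\lambda_4=\lambda_5=\lambda_6=0$) into the X-MEMS-wrt-spectrum template~(\ref{eq:egXMEMSwrtspec}) produces precisely $\bm{\varrho}_{{\rm X},P}^{(3)}$ of Eq.~(\ref{eq:rhoXP3}), and re-evaluating~(\ref{eq:defNLambdaub}) on its spectrum confirms $\mathcal{N}[\bm{\varrho}_{{\rm X},P}^{(3)}]=N_{{\rm X},P}^{(3)}$.
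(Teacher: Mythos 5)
Your reduction is the same as the paper's: impose $\lambda_4=\lambda_5=\lambda_6=0$ in problem~(\ref{eq:optgeneral}), note that the objective of~(\ref{eq:defNLambdaub}) collapses to the linear function $\lambda_1$, check that the spectrum~(\ref{eq:lambda1lambda2rank3}) is feasible with saturated purity and objective value $\frac{1}{3}(1+g_P)$, and finally push the optimal spectrum through the template~(\ref{eq:egXMEMSwrtspec}) to obtain~(\ref{eq:rhoXP3}). Where you genuinely differ is the optimality certificate. The paper eliminates $\lambda_3$ by normalization, rewrites the quadratic constraint as an LMI via a Schur complement, and then exhibits explicit dual SDP matrices $\widetilde{\bm{Z}}$ --- one for $P\in\,]1/3,1[$ and a separate asymptotic one ($\mathfrak{z}\to\infty$) for $P=1/3$ --- whose feasibility and objective value certify optimality through weak duality~(\ref{eq:weakduality}). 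You instead apply the KKT conditions of the three-variable quadratically constrained program. Up to a sign convention (as literally written, $(-1,0,0)=\mu(1,1,1)+2\nu(\lambda_1,\lambda_2,\lambda_3)$ yields $\nu=-1/[2(\lambda_1-\lambda_2)]<0$; for the maximization the purity multiplier is $+1/[2(\lambda_1-\lambda_2)]=1/g_P>0$), this route is not merely a heuristic for guessing $\bm{Z}$: for a convex program, the KKT conditions are \emph{sufficient} for global optimality, so once the multipliers are displayed at the candidate point, the proof is complete. Your closing hedge --- that the ``main obstacle'' is still to guess the dual matrix and verify $\mathfrak{d}=\mathfrak{p}$ --- undersells your own argument; no SDP dual matrix is needed if the KKT certificate is stated and checked carefully. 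What the paper's route buys in exchange is a purely mechanical, closed-form certificate (two trace identities plus positive semidefiniteness of a fixed matrix), uniform in style across Theorems~\ref{thm:rank2}--\ref{thm:deg}.

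Two genuine loose ends remain in your version. First, the endpoint $P=1/3$: there $g_P=0$, your multiplier $1/g_P$ diverges, and in fact no finite KKT multipliers exist, since stationarity requires $2\nu(\lambda_1-\lambda_2)=1$ while the optimum has $\lambda_1=\lambda_2$. The fix is immediate --- normalization together with $\lambda_1^2+\lambda_2^2+\lambda_3^2\le 1/3$ forces, by Cauchy--Schwarz, $\lambda_1=\lambda_2=\lambda_3=1/3$, so the feasible set is a single point --- but it must be stated; this degenerate case is exactly why the paper resorts to an asymptotic dual certificate there. Second, the ordering constraint $\lambda_2\geq\lambda_3$ is active at your optimum yet carries no multiplier in your stationarity equation. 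This is harmless only because the ordering constraints may be dropped from the program altogether (they hold automatically at the optimum, which is the role Proposition~\ref{proposition1} plays in the paper's setup), so that KKT is applied to the relaxed problem and the resulting optimizer happens to be feasible for the ordered one; your write-up should make this relaxation explicit before invoking stationarity.
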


\begin{theorem}\label{thm:deg} The maximal negativity achievable by a $2\times 3$ X state of purity $P~\in~]\frac{1}{5},1[$ with triple degeneracy of the smallest eigenvalue is 
\begin{equation}\label{eq:NXPdeg}
N_{{\rm X},P}^{(\mathrm{deg})}=\left\{\begin{array}{rr}
\frac{1}{3}(-1+5h_P)&\mbox{for } P \in ]\frac{1}{5},\frac{3}{8}[\\
\frac{1}{3}(1+g_P)&\mbox{for } P \in [\frac{3}{8},1[
\end{array}\right.,
\end{equation}
where
\begin{equation}\label{eq:defgh}
g_P\coloneq\sqrt{6P-2}\quad\mbox{and}\quad h_P\coloneq\sqrt{\frac{6P}{5}-\frac{1}{5}}\,.
\end{equation} 
The density matrix
\begin{equation}\label{eq:rhoXPdeg}
\bm{\varrho}_{{\rm X},P}^{(\mathrm{deg})}=
\frac{1}{2}\left[
\begin{array}{cccccc}
2\widetilde{\lambda} & \cdot & \cdot & \cdot & \cdot & \cdot \\
\cdot & 2\widetilde{\lambda}_2 & \cdot & \cdot & \cdot & \cdot \\
\cdot & \cdot & \widetilde{\lambda}_1+\widetilde{\lambda} & \widetilde{\lambda}_1-\widetilde{\lambda} & \cdot & \cdot\\
\cdot & \cdot & \widetilde{\lambda}_1-\widetilde{\lambda} & \widetilde{\lambda}_1+\widetilde{\lambda} & \cdot & \cdot\\
\cdot & \cdot & \cdot & \cdot & 2\widetilde{\lambda}_3 & \cdot\\
\cdot & \cdot & \cdot & \cdot & \cdot & 2\widetilde{\lambda}\\
\end{array}
\right]\!,
\end{equation}
with 
\begin{align}\label{eq:lambdasdeg}
\widetilde{\lambda}_1&=\left\{\begin{array}{rr}
\frac{1}{6}(1+4h_P)&\mbox{for } P \in ]\frac{1}{5},\frac{3}{8}[\\
\frac{1}{3}(1+g_P)&\mbox{for } P \in [\frac{3}{8},1[
\end{array}\right.\nonumber\\
\widetilde{\lambda}_2=\widetilde{\lambda}_3&=\left\{\begin{array}{rr}
\frac{1}{6}(1+h_P)&\mbox{for } P \in ]\frac{1}{5},\frac{3}{8}[\\
\frac{1}{6}(2-g_P)&\mbox{for } P \in [\frac{3}{8},1[
\end{array}\right.\\
\widetilde{\lambda}&=\left\{\begin{array}{rr}
\frac{1}{6}(1-2h_P)&\mbox{for } P \in ]\frac{1}{5},\frac{3}{8}[\\
0&\mbox{for } P \in [\frac{3}{8},1[
\end{array}\right. ,\nonumber
\end{align}
%
%
gives a construction of such a maximal state; that is $\mathcal{N}[\bm{\varrho}_{{\rm X},P}^{(\mathrm{deg})}]=N_{{\rm X},P}^{(\mathrm{deg})}$.
\end{theorem}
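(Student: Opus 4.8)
The plan is to follow the semidefinite-programming weak-duality strategy outlined in Sec.~\ref{sec:SDP}: reduce the theorem to a tractable convex program, exhibit the claimed spectrum as a primal feasible point, and certify its optimality by producing a dual feasible point whose objective value matches. First I would invoke the reduction used throughout this section --- every X-MEMS wrt purity $P$ is an X-MEMS wrt some spectrum realizing $P$ --- so that the figure of merit is $N_{{\rm X},\Lambda}$ of Eq.~(\ref{eq:defNLambdaub}). Imposing the triple-degeneracy constraint $\lambda_4=\lambda_5=\lambda_6=:\mu$ makes the term $(\lambda_4-\lambda_6)^2$ vanish and, since $\lambda_1\geq\lambda_5=\mu$, collapses the square root, so that $N_{{\rm X},\Lambda}=\lambda_1-3\mu$. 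The theorem thus amounts to solving
\begin{align*}
\text{maximize}\quad & \lambda_1-3\mu\\
\text{such that}\quad & \lambda_1+\lambda_2+\lambda_3+3\mu=1,\ \ \lambda_1^2+\lambda_2^2+\lambda_3^2+3\mu^2\leq P,\\
& \lambda_1\geq\lambda_2\geq\lambda_3\geq\mu\geq0,
\end{align*}
a linear objective over a convex feasible set (the intersection of a ball and a simplex), which I would cast as an SDP by rewriting the quadratic purity constraint as a linear matrix inequality through a Schur complement.

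The candidate primal point is the spectrum of Eq.~(\ref{eq:lambdasdeg}) (i.e. $\lambda_1=\widetilde{\lambda}_1$, $\lambda_2=\lambda_3=\widetilde{\lambda}_2$, $\mu=\widetilde{\lambda}$), and I would verify directly that it is feasible: normalization holds, the purity constraint is saturated --- a short computation in the interior branch gives $\sum_i\lambda_i^2=(1+5h_P^2)/6=P$ via $5h_P^2=6P-1$ --- and the ordering $\lambda_1>\lambda_2=\lambda_3>\mu\geq0$ is respected. The form of this point, and the emergence of two regimes, is anticipated by stationarity: with multipliers $\alpha$ (normalization) and $\beta\geq0$ (purity), the Lagrange conditions yield $\lambda_1=(1-\alpha)/(2\beta)$, $\lambda_2=\lambda_3=-\alpha/(2\beta)$, and $\mu=-(1+\alpha)/(2\beta)$, an interior solution valid only while $\mu\geq0$. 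Enforcing $\mu=0$ pins the threshold at $P=\tfrac{3}{8}$ (where $h_P=\tfrac12$ and $g_P=\tfrac12$). Hence for $P\in\,]\tfrac15,\tfrac38[$ the interior branch applies, while for $P\in[\tfrac38,1[$ the bound $\mu\geq0$ activates, $\mu=0$, and the problem reduces exactly to the rank-3 problem of Theorem~\ref{thm:rank3}, reproducing $\tfrac13(1+g_P)$ and its eigenvalues.

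To certify optimality I would, in each regime, assemble the corresponding dual feasible point $\bm{Z}$ (equivalently the multipliers $\alpha,\beta\geq0$ together with multipliers for whichever ordering/nonnegativity constraints are active --- none in the interior branch, and $\mu\geq0$ in the rank-3 branch) and evaluate its objective. Substituting the explicit $h_P$ (resp. $g_P$), I would confirm that the dual value equals the primal value $\tfrac13(-1+5h_P)$ (resp. $\tfrac13(1+g_P)$); by the chain $\mathfrak{d}\leq\mathfrak{d}^\ast\leq\mathfrak{p}^\ast\leq\mathfrak{p}$ of Eq.~(\ref{eq:weakduality}), the coincidence $\mathfrak{d}=\mathfrak{p}$ forces $\mathfrak{p}^\ast=\mathfrak{p}$ and pins down the maximal negativity $N_{{\rm X},P}^{(\mathrm{deg})}$. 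It then remains to check that the displayed matrix $\bm{\varrho}_{{\rm X},P}^{(\mathrm{deg})}$ of Eq.~(\ref{eq:rhoXPdeg}) indeed carries the optimal spectrum with its smallest eigenvalue triply degenerate --- a direct application of the eigenvalue formula~(\ref{eq:lambdas}) to its $2\times2$ block --- and that its negativity equals $N_{{\rm X},P}^{(\mathrm{deg})}$, completing the construction.

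The main obstacle is the explicit construction and verification of the dual feasible point over the entire purity range, and cleanly managing the handoff at $P=\tfrac38$. I expect the delicate part to be confirming the positive semidefiniteness of $\bm{Z}$ (the dual LMI) and the sign $\beta\geq0$ throughout each subinterval, while ensuring that the active-constraint set used in the interior branch (no ordering constraints binding) and in the rank-3 branch (the bound $\mu\geq0$ binding) produces the same value in the continuous limit $\mu\to0$ at $P=\tfrac38$. Once a matching dual point is exhibited on each subinterval, weak duality retroactively legitimizes the stationarity heuristic and the global optimality can be asserted rigorously.
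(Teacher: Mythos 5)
Your proposal follows essentially the same route as the paper's proof: impose the triple-degeneracy constraint so that the objective of problem~(\ref{eq:optgeneral}) linearizes to $\lambda_1-3\lambda$, eliminate one variable via normalization, convert the quadratic purity constraint into an LMI through a Schur complement, verify that the spectrum~(\ref{eq:lambdasdeg}) is a primal feasible point with value $N_{{\rm X},P}^{(\mathrm{deg})}$, and certify its optimality by weak duality against a dual feasible point of matching value. Your regime split at $P=\tfrac{3}{8}$, the saturation computation $\sum_i\lambda_i^2=(1+5h_P^2)/6=P$, and the ordering check are all correct and coincide with the paper's construction.

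What separates your proposal from a complete proof is precisely the step that constitutes the technical bulk of the paper's argument: the paper \emph{exhibits} explicit $5\times 5$ dual matrices $\widetilde{\bm{Z}}$ --- rank one with nonzero eigenvalue $\tfrac{4}{3}+\tfrac{3h_P}{2}+\tfrac{4}{3h_P}$ for $P\in\,]\tfrac{1}{5},\tfrac{3}{8}[$, and rank two with eigenvalues $\tfrac{4}{3}-\tfrac{2g_P}{3}$ and $\tfrac{g_P}{6}+\tfrac{7}{3g_P}$ for $P\in[\tfrac{3}{8},1[$ --- and then verifies positive semidefiniteness, the trace constraints, and $\tr[\bm{F}_0\widetilde{\bm{Z}}]=N_{{\rm X},P}^{(\mathrm{deg})}$ on each subinterval. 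Your KKT/stationarity analysis is the right way to generate these certificates (and is arguably more systematic than the paper's guess-and-verify philosophy, which it explicitly endorses in Sec.~\ref{sec:SDP}), but until the matrices are written down and their feasibility checked, the optimality claim remains unproved. One caution on your upper branch: saying the problem ``reduces exactly to the rank-3 problem of Theorem~\ref{thm:rank3}'' is only a statement about the $\mu=0$ slice; that theorem's certificate bounds the maximum over that slice, not over triply degenerate spectra with $\mu>0$, so the dual point for $P\in[\tfrac{3}{8},1[$ must additionally carry a nonnegative multiplier for the active constraint $\mu\geq 0$. You do flag this, and it is visible in the paper's certificate as the nonzero $(1,1)$ entry $\tfrac{4}{3}-\tfrac{2}{3g_P}$ of $\widetilde{\bm{Z}}$, which vanishes exactly at the handoff point $P=\tfrac{3}{8}$, ensuring the continuity you were concerned about.
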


As it should be expected, the more stringent the added constraints are, the lower the maximal attainable negativity will be. In fact, for the applicable values of $P$, we generally have
\begin{equation}\label{eq:hierarchy}
N_{{\rm X},P}^{(\text{deg})}\geq N_{{\rm X},P}^{(3)} \geq N_{{\rm X},P}^{(2)}\,,
\end{equation}
as depicted in Fig.~\ref{fig:NXP}.  Remarkably, $N_{{\rm X},P}^{(\text{deg})}$ equals $N_{{\rm X},P}^{(3)}$ for $P\in[3/8,1[$, as $\bm{\varrho}_{{\rm X},P}^{(\text{deg})}$ turns out to be rank-3 in this purity domain [cf. Eq.~(\ref{eq:lambdasdeg})]. Nevertheless, the possibility of making $\bm{\varrho}_{{\rm X},P}^{(\text{deg})}$ full-rank (yet three-fold degenerate; $\lambda_4=\lambda_5=\lambda_6$)  is generally benign; $N_{{\rm X},P}^{(\text{deg})}$ is slightly greater than $N_{{\rm X},P}^{(3)}$ for $P\in\,[1/3,3/8[$, as noticeable from the magnified inset in Fig.~\ref{fig:NXP}.

\begin{figure}[h]
\centering
\includegraphics[width=\columnwidth]{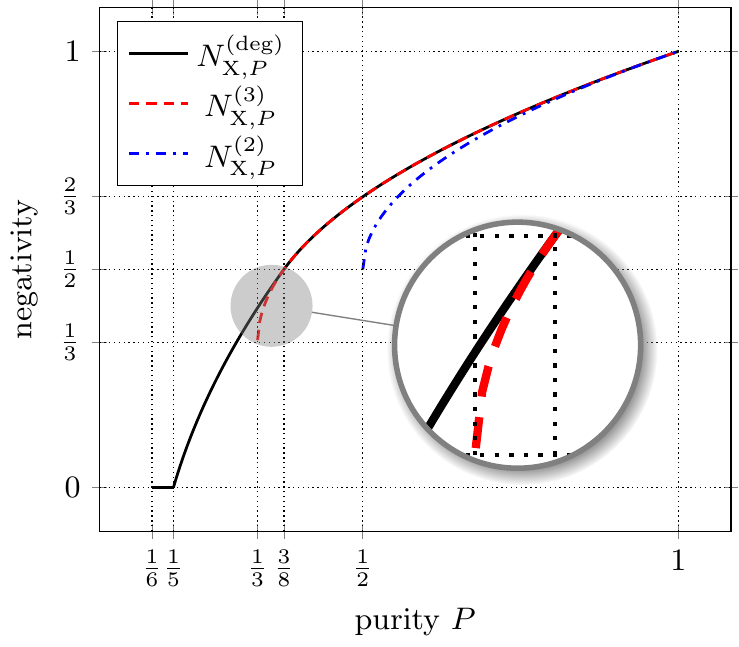}
\caption{(Color online) Maximal negativity attainable by X states of purity $P$ and (i) rank-2 [$N_{{\rm X},P}^{(2)}$], (ii) rank-3 [$N_{{\rm X},P}^{(3)}$] and (iii) triple degeneracy of the smallest eigenvalue [$N_{{\rm X},P}^{(\text{deg})}$]. In the purity range $P\in[3/8,1[$, the identity between $N_{{\rm X},P}^{(\text{deg})}$ and $N_{{\rm X},P}^{(3)}$ holds, meaning that it is not possible to increase the negativity by raising the rank from $3$ while keeping the smallest eigenvalue three-fold degenerate. Nevertheless, this proves to be a fruitful strategy in the purity range $P\in]1/5,3/8[$, where $N_{{\rm X},P}^{(\text{deg})}$ turns out to be slightly greater than $N_{{\rm X},P}^{(3)}$ (cf. magnified inset).}\label{fig:NXP}
\end{figure}

It is instructive to compare the results of this section with the candidate $2\times 3$ MEMS proposed in Ref.~\cite{13Hedemann}, whose negativity in terms of $P$ can be shown to be
\begin{equation}\label{eq:nfunp}
N_{P}^{(\text{Hed})}\coloneq\left\{\begin{array}{l}
\frac{1}{5}\left[-1+e_P+\sqrt{(-1+e_P)^2-\frac{25}{4} e_P^2}\right]\\
\hspace{4.3cm}\mbox{for } P \in ]\frac{1}{5},\frac{3}{8}[ \,,\\[2mm]
\frac{1}{3}(1+g_P)\quad \mbox{for } P \in [\frac{3}{8},1[\,,
\end{array}
\right.
\end{equation}
where $g_P$ has already been defined in Eq.~(\ref{eq:defgh}) and
\begin{equation}
e_P\coloneq \sqrt{\frac{40P}{7}-\frac{8}{7}}.
\end{equation}
Remarkably, for $P\in[\frac{3}{8},1[$, $N_P^{(\text{Hed})}$ matches $N_{{\rm X},P}^{(\text{deg})}$ [cf. Eq.~(\ref{eq:NXPdeg})]. However, for $P\in]\frac{1}{5},\frac{3}{8}[$, $N_{{\rm X},P}^{(\text{deg})}$ is slightly greater than $N_P^{(\text{Hed})}$. Although the difference $N_{{\rm X},P}^{(\text{deg})}-N_{P}^{(\text{Hed})}$ is very small, as shown in Fig.~\ref{fig:disqualify}, it suffices to disqualify the prototype state of Ref.~\cite[Eq.~(64)]{13Hedemann} as an actual representation of a $2\times 3$ MEMS.

\begin{figure}[h]
\centering
\includegraphics[width=\columnwidth]{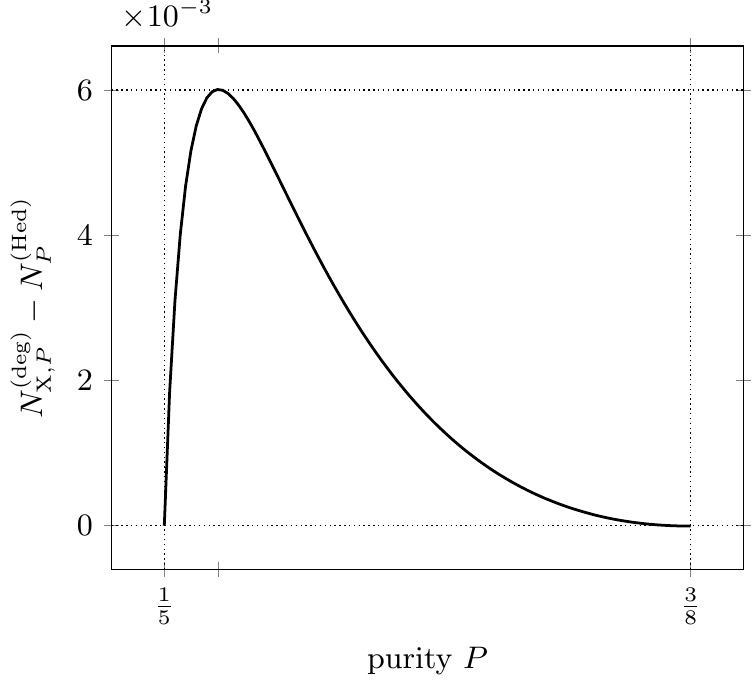}
\caption{Difference between $N_{{\rm X},P}^{(\text{deg})}$ and $N_{{\rm X},P}^{(\text{Hed})}$ --- the conjectured maximal negativity in terms of purity for $2\times 3$ states according to Ref.~\cite{13Hedemann}. The fact that $N_{{\rm X},P}^{(\text{deg})}>N_P^{(\text{Hed})}$ for $P\in]\frac{1}{5},\frac{3}{8}[$ refutes the conjecture that the prototype states of Ref.~\cite[Eq.~(64)]{13Hedemann} are actual $2\times 3$ MEMS wrt purity.}\label{fig:disqualify}
\end{figure}


         

In the next section, we explore the results of Theorems~\ref{thm:rank2} and~\ref{thm:rank3} to show that, unlike it occurs for two-qubit X states~\cite{13Hedemann,14Mendonca79}, it is not generally possible to map an arbitrary $2\times 3$ state into a $2\times 3$ X state of same negativity and purity by means of a unitary transformation, this time supporting the conjecture in Ref.~\cite{13Hedemann}.

\section{Entanglement Universality of $\mathbf{2\times 3}$ X states}\label{sec:universality}

In Ref.~\cite{13Hedemann}, Hedemann proposed that it may be possible to \emph{unitarily} transform any two-qubit state into an X state of the same entanglement; a conjecture that was later confirmed in Ref.~\cite{14Mendonca79} (with entanglement measured by concurrence, negativity, or relative entropy of entanglement). Furthermore, Ref.~\cite{13Hedemann} also conjectured that the property of entanglement-preserving unitary (EPU) equivalence to X states would not hold for systems larger than two qubits, and proposed the true-generalized X states (TGX states) as a nonX alternative that \textit{does} achieve EPU equivalence. In this section, we confirm the conjecture of Ref.~\cite{13Hedemann} that EPU equivalence of two-qubit X states is not inherited by $2\times 3$ systems, that is, \emph{there are $2\times 3$ states which cannot be unitarily mapped into $2\times 3$ X states of the same entanglement}. 

In order to establish this negative result, note that if EPU equivalence to X states \textit{were} to hold, then for every $2\times 3$ (input) state there would be a $2\times 3$ (output) X state of same rank, purity, and entanglement (since rank and purity are preserved under unitary transformations). So, by contraposition, if we can find even just \textit{one} $2\times 3$ \textit{non}X state for which there is \textit{not} a corresponding $2\times 3$ X state of same rank, purity, and negativity, then EPU equivalence cannot hold for $2\times 3$ X states. 

Next, we demonstrate that there exist rank-2 nonX states with purity $P$ that exceed the negativity threshold $N_{{\rm X},P}^{(2)}$. Since this is the maximal negativity achievable by a $2\times 3$ X state of purity $P$ and rank 2 (cf. Theorem~\ref{thm:rank2}), there are no unitarily relatable X-counterparts of the same negativity for any of the constructed states, and thus the rank-2 X states are not EPU equivalent to general $2\times 3$ states, in agreement with~\cite{13Hedemann}.

Consider the following parametric family of rank-2 \textit{non}X states introduced in~\cite{13Hedemann} as a $2\times 3$ rank-2 candidate for \emph{true-generalized X states}\footnote{The ``true generalization'' here implies a valid extension of the EPU equivalence of two-qubit X states to higher dimensional systems~\cite{13Hedemann}; namely, every state can be unitarily mapped into a corresponding TGX state of the same entanglement. However, whether or not the state of Eq.~(\ref{eq:rhoTGX2}) truly parametrizes a $2\times 3$ rank-2 TGX state remains an open problem. Throughout, we stick to the name nonetheless.} (TGX states),
\begin{equation}\label{eq:rhoTGX2}
\bm{\varrho}^{(2)}_{\text{TGX}} =\frac{1}{2} \left( {\begin{array}{cccccc}
   2 p_1 c_{\theta_1}^2 &  \cdot  &  \cdot  &  \cdot  &  \cdot  & p_1 s_{2\theta _1}\\
    \cdot  & 2 p_2 c_{\theta _2 }^2  & \cdot & p_2 s_{2\theta_2}  &  \cdot  & \cdot   \\
    \cdot  &  \cdot  &  \cdot  &  \cdot  &  \cdot  &  \cdot   \\
    \cdot  & p_2 s_{2\theta_2} &  \cdot  & 2 p_2 s_{\theta _2}^2 &  \cdot  & \cdot \\
    \cdot  &  \cdot  &  \cdot  &  \cdot  &  \cdot  &  \cdot   \\
   p_1 s_{2\theta _1} &  \cdot  &  \cdot  &  \cdot  &  \cdot  & 2 p_1 s_{\theta _1 }^2 \\
\end{array}} \right)\!,
\end{equation}
where $c_x  \coloneq \cos (x)$ and $s_x  \coloneq \sin (x)$. The only nonzero eigenvalues of $\bm{\varrho}^{(2)}_{\text{TGX}}$ are parametrically given by the probabilities $p_1$ and $p_2$ ($p_{1},p_{2}> 0$ and $p_1+p_2=1$), whereas its purity is $P =p_1^2+p_2^2$, and its negativity $\mathcal{N}_{\text{TGX}}^{(2)}=\mathcal{N}_{\text{TGX}}^{(2)}(\theta_1,\theta_2,p_1,p_2)$ is given by
\begin{align}
   \mathcal{N}_{\text{TGX}}^{(2)}=&  - p_1c_{\theta_1 }^2  - p_2 s_{\theta _2 }^2 + \sqrt {p_1^2c_{\theta _1 }^4  + p_2^2s_{2\theta _2}^2 }  \nonumber\\
&+ \sqrt {p_2^2s_{\theta _2}^4  + p_1^2s_{2\theta _1}^2 }\,.
\end{align}
For fixed values of purity $P\in[1/2,1[$, which determine the probabilities $p_1=(1+f_P)/2$ and $p_2=(1-f_P)/2$ where $f_P$ is from Eq.~(\ref{eq:fP}), we ran unconstrained numerical maximizations\footnote{We employ the MATLAB function \texttt{fminunc}, running the active-set algorithm with random initial guesses and termination tolerances for the parameter values (\texttt{TolX}) and objective function value (\texttt{TolFun}) set to $10^{-6}$.} of $\mathcal{N}_{\text{TGX}}^{(2)}$ over $\theta_1$ and $\theta_2$, the results of which are denoted by $N_{\text{TGX},P}^{(2)}$. These are shown in Fig.~\ref{fig:TGXrank2} along with $N_{{\rm X},P}^{(2)}$ and $N_{{\rm X},P}^{(\text{deg})}$ for the sake of comparison.

\begin{figure}[h]
\centering
\includegraphics[width=\columnwidth]{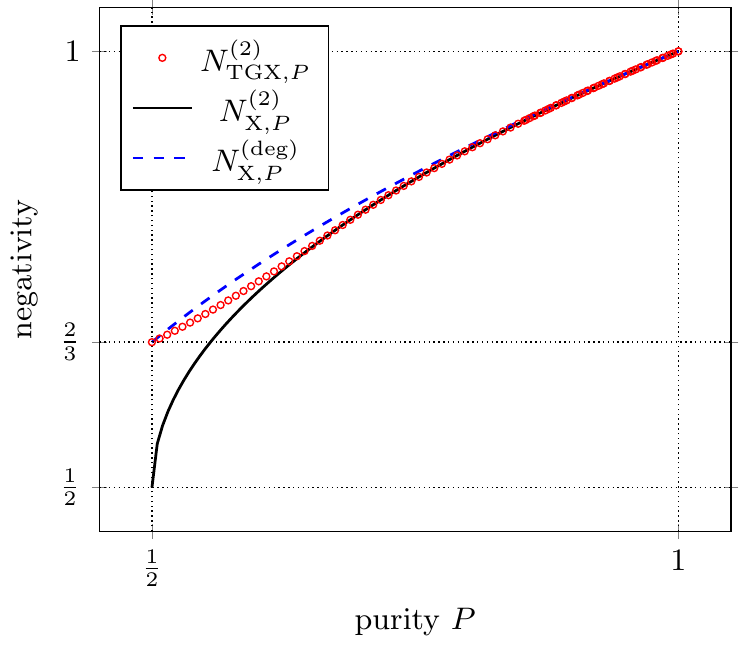}
\caption{(Color online) Graphical demonstration that there exist $2\times 3$ rank-2 nonX states of purity $P$ that reach higher negativities than the rank-2 X-MEMS of purity $P$, that is $N_{\text{TGX},P}^{(2)}\geq N_{{\rm X},P}^{(2)}$. The states reaching the negativity $N_{\text{TGX},P}^{(2)}$ are parametrized by Eq.~(\ref{eq:rhoTGX2}) and the parameter values were numerically obtained, whereas the X states reaching negativities $N_{{\rm X},P}^{(2)}$ and $N_{{\rm X},P}^{(\text{deg})}$ were analytically constructed in Theorems~\ref{thm:rank2} and \ref{thm:deg}, respectively. Note that $N_{{\rm X},P}^{(\text{deg})}$ corresponds to \textit{higher-than}-rank-2 states and is only shown as a reference. This semi-analytical plot further supports the numerically generated plot of Fig. 9 of Ref.~\cite{13Hedemann}, which also showed evidence that $N_{\text{TGX},P}^{(2)}$ may be the upper bound of all rank-2 general $2\times 3$ states.}
\label{fig:TGXrank2}
\end{figure}

Figure~\ref{fig:TGXrank2} shows that $N_{\text{TGX},P}^{(2)}$ is generally greater than $N_{{\rm X},P}^{(2)}$, hence the general $2\times 3$ states output by our numerical routine cannot be unitarily transformed into X states of the same negativity. We remark that this conclusion neither depends on assuming that our numerical maximizations converged to global optima, nor that Eq.~(\ref{eq:rhoTGX2}) gives a valid parametrization for $2\times 3$ rank-2 TGX states. A refutation of any one of these hypothesis would only mean that $2\times 3$ rank-2 states of purity $P$ with even greater negativity could be found.

It is interesting to ask whether rank-$r$ $2\times 3$ states with purity-parametric negativity greater than $N_{{\rm X},P}^{(r)}$ also occur for $r>2$. In what follows, we provide some numerical evidence that this may not be the case already for $r=3$. As in the $r=2$ case, we start with Hedemann's candidate parametrization for $2\times 3$ rank-3 TGX states~\cite{13Hedemann},
\begin{equation}\label{eq:rhoTGX3}
\bm{\varrho}^{(3)}_{\text{TGX}} \!=\!\frac{1}{2}\!\left(\! {\begin{array}{cccccc}
   2 p_1 c_{\theta_1}^2 &\!\!  \cdot  &\!\!  \cdot  &\!\!  \cdot  &\!\!  \cdot  &\!\! p_1 s_{2\theta _1}\\
    \cdot  &\!\! 2 p_2 c_{\theta _2 }^2  &\!\! \cdot &\!\! p_2 s_{2\theta_2}  &\!\!  \cdot  &\!\! \cdot   \\
    \cdot  &\!\!  \cdot  &\!\!  2 p_3 c_{\theta _3 }^2  &\!\!  \cdot  &\!\!  p_3 s_{2\theta_3}  &\!\!  \cdot   \\
    \cdot  &\!\! p_2 s_{2\theta_2} &\!\!  \cdot  &\!\! 2 p_2 s_{\theta _2}^2 &\!\!  \cdot  &\!\! \cdot \\
    \cdot  &\!\!  \cdot  &\!\!  p_3 s_{2\theta_3}  &\!\!  \cdot  &\!\!  2 p_3 s_{\theta _3 }^2  &\!\!  \cdot   \\
   p_1 s_{2\theta _1} &\!\!  \cdot  &\!\!  \cdot  &\!\!  \cdot  &\!\!  \cdot  &\!\! 2 p_1 s_{\theta _1 }^2 \\
\end{array}}\!\! \right)\!,
\end{equation}
whose nonzero eigenvalues are parametrically given by probabilities $p_1$, $p_2$, and $p_3$ where $p_{1},p_{2},p_{3}> 0$ and $p_1+p_2+p_3=1$, whereas its purity is $P = p_1^2+p_2^2+p_3^2$, and its negativity $\mathcal{N}_{\text{TGX}}^{(3)}=\mathcal{N}_{\text{TGX}}^{(3)}(\theta_1,\theta_2,\theta_3,p_1,p_2,p_3)$ is
\begin{equation}
  \mathcal{N}_{\text{TGX}}^{(3)}=  \sum_{\ell=1}^3 |\sigma_\ell|-\sigma_\ell\,,
\end{equation}
with $\{\sigma_\ell\}_{\ell=1}^3$ denoting the three possibly negative eigenvalues of the partial transpose of $\bm{\varrho}^{(3)}_{\text{TGX}}$, explicitly,
\begin{align}
\sigma_{4-k}=&\frac{1}{2}\left(p_i\sin^2\theta_i+p_j\cos^2\theta_j\right)\nonumber\\
&-\frac{1}{2}\sqrt{p_k^2\sin^2(2\theta_k)+\left(p_i\sin^2\theta_i-p_j\cos^2\theta_j\right)^2},
\end{align}
where $(i,j,k)$ must be taken as a cyclic permutation of $(1,2,3)$. In this case, the specification of $P$ does not fully specify $p_1$, $p_2$, and $p_3$, for which reason the maximization of $\mathcal{N}_{\text{TGX}}^{(3)}$ must be taken with respect to the variables $\theta_{1},\theta_{2},\theta_{3}$ \emph{and} $p_{1},p_{2},p_{3}$, with constraints $p_{1},p_{2},p_{3}>0$, $p_1+p_2+p_3=1$, and $p_1^2+p_2^2+p_3^2=P$. By numerically implementing this optimization problem in MATLAB\footnote{Here, we employ the MATLAB function \texttt{fmincon}, running the active-set algorithm with a random initial guess for $P=1/3$ and, for $P>1/3$, with initial guesses equal to the converged variable values obtained in the previous optimization (for the previous value of $P$). The termination tolerances for the parameter values (\texttt{TolX}) and objective function value (\texttt{TolFun}) were set to $10^{-6}$, whereas the tolerance for the maximal constraint violation (\texttt{TolCon}) was set to $10^{-15}$.} 
\begin{figure}[h]
\centering
\includegraphics[width=\columnwidth]{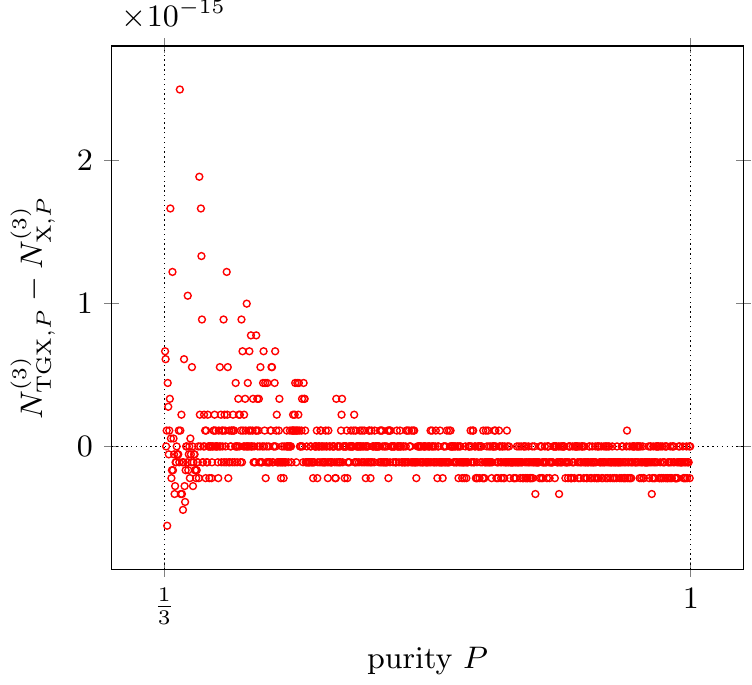}
\caption{(Color online) Numerical evidence that $N_{\text{TGX},P}^{(3)}=N_{{\rm X},P}^{(3)}$ for $P\in[\frac{1}{3},1[$, suggesting that the $2\times 3$ rank-3 X-MEMS wrt purity achieve the highest value of negativity per purity attainable by $2\times 3$ rank-3 TGX states.}\label{fig:TGXrank3}
\end{figure}
for 1000 values of $P$ uniformly sampled in the interval $[1/3,1[$, we find that the maximal values of $\mathcal{N}_{\text{TGX}}^{(3)}$ (henceforth denoted by $N_{\text{TGX},P}^{(3)}$), match $N_{{\rm X},P}^{(3)}$ to the order of numerical precision $10^{-15}$, as shown in Fig.~\ref{fig:TGXrank3}.


Confirmation that $N_{\text{TGX},P}^{(3)}=N_{{\rm X},P}^{(3)}$ would require rigorous substantiation of the adopted premises, namely that Eq.~(\ref{eq:rhoTGX3}) does parametrize a family of $2\times 3$ rank-3 TGX states and that our numerical procedure has not output local (nonglobal) optima. Moreover, it should be stressed that even if these premises were confirmed, we could still not rely solely on $N_{\text{TGX},P}^{(3)}=N_{{\rm X},P}^{(3)}$ to claim that any $2\times 3$ rank-3 state can be unitarily mapped into a $2\times 3$ rank-3 X state of the same negativity. That is because, for $r>2$, the condition of having two states with same purity and rank does not imply that these states are unitarily related (cf. Ref.~\cite[App.~C]{14Mendonca79}).

\section{$\mathbf{2\times 3}$ MEMS wrt purity}\label{sec:MEMS}

In this section we drop the X form and all the spectral constraints admitted in Sec.~\ref{sec:spconstrainedXMEMS} to search for MEMS wrt purity amongst all $2\times 3$ states of purity $P$. From a mathematical standpoint, that amounts to finding optimal solutions for the (nonconvex) problem
\begin{align}
N_P\coloneq& \max_{\bm{\rho}_P} \|\bm{\rho}_P^\Gamma\|_{\text{tr}}-1 \nonumber\\
&\mbox{such that}\quad \bm{\rho}_P\geq 0\,,\quad \tr\bm{\rho}_P=1\,,\quad \tr\bm{\rho}_P^2\leq P\,,\label{eq:optgenstep1}
\end{align}
where the negativity of a legitimate $2\times 3$ state is maximized under the sole constraint that $P$ is fixed\footnote{Once again, we rely on the ``Convex Optimization Maximum Principle'' to replace the purity constraint $\tr \bm{\rho}_P^2=P$ with $\tr\bm{\rho}_P^2\leq P$. As noted before, this has no impact whatsoever on the solution of the optimization problem since the maximum of a convex function over a convex set necessarily occurs on the boundary of that set, in which case $\tr \bm{\rho}_P^2=P$ will be satisfied even under the weaker requirement $\tr\bm{\rho}_P^2\leq P$.} within $]\frac{1}{5},1[$. Notice that apart from nonnegativity, normalization, and degree of mixedness, no further constraints (e.g, sparse structure, rank-deficiency, or degeneracy) have been imposed on $\bm{\rho}_P$. As a result, $N_P$ is the highest purity-parametric negativity amongst \emph{all} $2\times 3$ states.

In order to deal with problem~(\ref{eq:optgenstep1}), we first invoke a key variational characterization of the trace norm of an arbitrary Hermitian operator $\bm{A}$~\cite[Lemma 4]{07Rastegin9533},
\begin{equation}
\|\bm{A}\|_{\text{tr}}=-\tr\bm{A}+2\max_{0\leq\bm{\Pi}\leq\bm{I}}\tr[\bm{\Pi A}]\,,
\end{equation}
which, applied to problem~(\ref{eq:optgenstep1}) (along with the fact that $\tr\bm{\rho}_P^\Gamma=1$), leads to
\begin{align}
N_P=&-2+2\max_{\bm{\rho}_P,\bm{\Pi}} \tr[\bm{\Pi}\bm{\rho}_P^\Gamma]\nonumber\\
&\mbox{such that}\quad \bm{\Pi}\geq 0\,,\quad \bm{I}-\bm{\Pi}\geq 0\,,\nonumber\\
& \bm{\rho}_P\geq 0\,,\quad \tr\bm{\rho}_P=1\,,\quad \tr\bm{\rho}_P^2\leq P\,.\label{eq:optprblm3}
\end{align}

In spite of the bilinearity of the objective function $\tr[\bm{\Pi}\bm{\rho}_P^\Gamma]$, problem~(\ref{eq:optprblm3}) reduces to a convex optimization (an SDP) if either $\bm{\Pi}$ or $\bm{\rho}_P$ are held fixed. In cases like this, it is customary to approach the nonconvex optimization problem with an \emph{alternate convex search} (ACS) algorithm~\cite{76Wendell643,94Ghaoui2678,07Gorski373,09Kosut443} which, starting from some initial guess for one of the variables, solves a convex optimization for the other variable. Then, fixing the latter at the solution just obtained solves another convex optimization for the former. Such optimization rounds are iterated until convergence is attained.

Figure~\ref{fig:ACSevolution} illustrates a typical run of the ACS scheme for problem~(\ref{eq:optprblm3}), with initial guess $\bm{\rho}_{P,0}$ taken as a set of randomly generated full-rank $2\times 3$ states of purity $P$ uniformly sampled over $]\frac{1}{5},1[$. The plots provide a partial view of the evolution of the figure-of-merit, 
\begin{equation}
N_{P,n}\coloneq -2+2\tr[\bm{\Pi}_n\bm{\rho}_{P,n}^\Gamma]\,,
\end{equation}  
for a few values of $n$, which labels the outputs of the $n$th optimization round. The left scale of each plot applies to the output negativity at the current optimization round (plus signs), whereas the right scale applies to the difference between the output negativities in the current and previous rounds (empty circles).

\begin{figure*}[t]
\centering
\includegraphics[width=\textwidth]{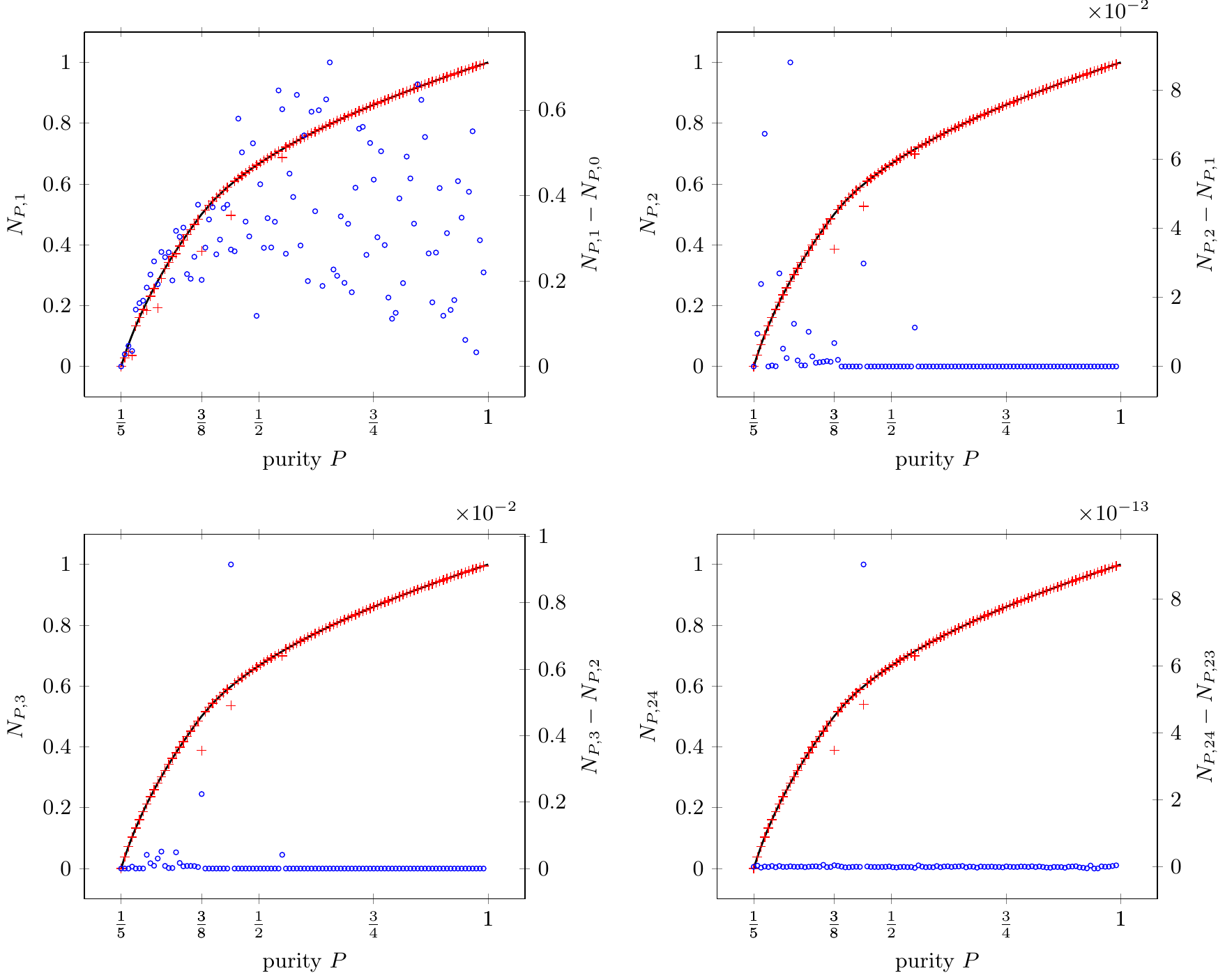}
  \caption{(Color online) A typical evolution of negativity during the ACS scheme. Starting from a random set of 100 full-rank $2\times 3$ states of purities $P$ uniformly sampled over the interval $]\frac{1}{5},1[$, SDPs are solved to obtain their negativities by optimally choosing $\Pi_0$. After this initialization round ($n=0$), subsequent rounds iteratively optimize $\rho_{P,n}$ (for $\Pi_n$ set to $\Pi_{n-1}$) and $\Pi_n$ (for $\rho_{P,n}$ set to $\rho_{P,n-1}$, the state obtained from the previous iteration) until no more significant negativity increment is observed between consecutive rounds (convergence). Plots (a), (b), (c), and (d) show the negativities (+, with its scale on the left vertical axis) and their difference wrt the previous round ($\circ$, with its scale on the right vertical axis) output at the $n=1$, $n=2$, $n=3$, and $n=24$ optimization rounds, respectively. For the great majority of considered purities, the converged negativity values match $N_{{\rm X},P}^{(\text{deg})}$, represented in the plots by the solid line.}\label{fig:ACSevolution}
\end{figure*}

For the numerical experiment depicted in Fig.~\ref{fig:ACSevolution}, a number of $24$ optimization rounds were necessary to fulfill the adopted convergence (stop) criterion
\begin{equation}\label{eq:stopcriterion}
N_{P,n}-N_{P,n-1}<10^{-12}
\end{equation}
for \emph{all} considered values of $P\in]\frac{1}{5},1[$. Surprisingly, for the great majority of purity values, the converged negativity values match $N_{{\rm X},P}^{(\text{deg})}$ to numerical precision [in order to provide some visual reference, Eq.~(\ref{eq:NXPdeg}) is plotted as a solid line in each graph], the only exceptions being a few scattered points that converged to \emph{smaller} negativities. 

The occurrence of these ``subconverged points'' is a direct consequence of the fact that ACS is not guaranteed to converge to a global optimum~\cite{76Wendell643}. As a matter of fact, one cannot even be sure that the massive convergence to $N_{{\rm X},P}^{(\text{deg})}$ represents global optimum convergence. In this framework, the best one can do is to repeat as many runs as possible of the ACS scheme (each of which starting form a different initial guess) and hope that global optimality is attained in at least one of them, for at least one value of $P$.

In our numerical experiments we have performed over a thousand runs of the ACS scheme and have never observed a single convergence to a value greater than $N_{{\rm X},P}^{(\text{deg})}$. This observation strongly \emph{suggests} that $N_{{\rm X},P}^{(\text{deg})}$ is indeed the global optimum, or, what amounts to be the same, that $\bm{\varrho}_{{\rm X},P}^{(\text{deg})}$ [cf. Eqs.~(\ref{eq:rhoXPdeg}) and (\ref{eq:lambdasdeg})] represents a family of $2\times 3$ MEMS wrt to purity. If rigorously confirmed, this observation would imply that X states are rich enough to subsume MEMS wrt purity (without rank constraints) for qubit-qutrit systems, as is well-known to be the case for two-qubit systems~\cite{01Munro30302,03Wei22110} (even if rank constraints are applied for the two qubits). Unfortunately though, owing to the convergence properties of the ACS scheme, this can only be stated as a conjecture so far.

\section{Concluding Remarks}\label{sec:conclusion}

By formulating and analytically solving nonlinear optimization problems, we have characterized families of $2\times 3$ X states that reach maximal entanglement negativity either for a given spectrum, or for a given purity \emph{and} one of the following extra constraints: rank $2$, rank $3$, or three-fold degeneracy of the smallest eigenvalue. In so doing, we have refuted a current candidate of $2 \times 3$ MEMS wrt purity~\cite{13Hedemann} and replaced it with a slightly more entangled family of X states, whose overall optimality is supported by compelling numerical evidence. 

An intriguing byproduct of our X-MEMS constructions was the observation that $2\times 3$ rank-2 X-MEMS of purity $P$ can have their negativity exceeded by more general (nonX) $2\times 3$ rank-2 states of the same purity, implying that not every $2\times 3$ state can be \emph{unitarily} mapped into an X state of same entanglement. This simple observation reveals the set of $2\times 3$ X states to be a ``less universal'' set than that of $2 \times 2$ X states, since any $2\times 2$ state can be mapped into an X state via an entanglement-preserving unitary transformation; a property that has been recently entitled ``entanglement universality of two-qubit X states''~\cite{13Hedemann,14Mendonca79}.

This observed lack of entanglement universality of $2\times 3$ X states was a conjecture in Ref.~\cite{13Hedemann}, where alternative matrix forms were postulated with the intent of establishing a universal family of $2\times 3$ states --- the so-called TGX states. Although our results suffice to confirm the above-mentioned universality breach, we can neither confirm nor disprove the entanglement universality of the $2\times 3$ TGX candidate states proposed by Hedemann. However, as a favorable note, we emphasize that the breach of entanglement universality in $2\times 3$ rank-2 X states was revealed by the highest negativities reached by Hedemann's $2\times 3$ rank-2 TGX candidate states. In addition, it should also be noted that our numerical analysis for rank-3 states reinforces the preexisting thesis~\cite{13Hedemann} that the candidate rank-3 TGX family does just as well as the rank-3 X family as far as the maximization of negativity per purity is concerned. However, since we cannot be sure that rank-3 X-MEMS are actually rank-3 MEMS, nothing can be concluded on the validity of the proposed matrix form for the candidate rank-3 TGX states.

We conclude by discussing some possible directions for future work. First, it is conceivable that a weaker entanglement universality property holds for $2\times 3$ X states, namely, \emph{it is always possible to nonunitarily map an arbitrary $2\times 3$ state into a $2\times 3$ X state of same negativity and purity\emph}. Notice that such a hypothesis does not contradict our results, since the nonunitarity of the mapping would allow for rank changes between the input and output states. Furthermore, the offered numerical evidence that $2\times 3$ MEMS wrt purity can always be made X shaped (as long as no rank constraints are enforced), reinforces our confidence in this thesis.

Finally, it would be very fruitful to replace the ACS scheme with a global optimization strategy for the bilinear problem~(\ref{eq:optprblm3}). In that case, a resulting match between the converged value and $N_{{\rm X},P}^{(\text{deg})}$ would leave no doubt that $\bm{\varrho}_{{\rm X},P}^{(\text{deg})}$ actually parametrizes a $2\times 3$ MEMS family with respect to purity. A number of global optimization algorithms for bilinear programs, based on branch-and-bound~\cite{60Land497,63Little972} and Benders decomposition~\cite{62Benders238,72Geoffrion237} strategies, have appeared in the applied mathematics literature~\cite{94Goh2009,00VanAntwerp363,90Floudas1397,90Visweswaran1419,93Floudas178,97Beran,00Tuan561} and call for further investigation as to whether (and how) they could be applied to the problem at hand. Another promising direction of investigation involves the relaxation theory of nonconvex problems~\cite{01Lasserre796,03Parrilo293,03Parrilo83}, where hierarchies of SDP relaxations are devised and, at each step of the hierarchy, a better approximation of the global solution is attained. Such a method has already been employed in several problems in the context of quantum information theory~\cite{04Doherty022308,04Eisert062317,07Navascues010401,07Liang042103}.

In closing, our results reveal many interesting features of entanglement in $2\times 3$ systems, as well as new directions to pursue. It is our hope that this identification of explicit forms of $2\times 3$ MEMS will prompt further research on the role of entanglement in quantum information tasks involving real-world (i.e., mixed) low-dimensional quantum systems.

\begin{acknowledgments}
PEMFM thanks Alessandro Firmiano for discussions and, in particular, for collaboration in the proof of Proposition~\ref{proposition1}.
\end{acknowledgments}

\appendix
\section{Negative eigenvalues of partial-transposed of $\mathbf{2\times 3}$ X states}\label{app:onenegeigenvalue}
In Ref.~\cite{13Rana054301}, Rana demonstrated that partial transposition of an arbitrary $m \times n$ state has at most $(m-1)(n-1)$ negative eigenvalues, which implies an upper bound of $2$ negative eigenvalues for $2\times 3$ systems. Here, we show that this upper bound drops to $1$ for \emph{X states} in $2\times 3$ systems.

Let $\bm{\rho}_{\rm X}^{\Gamma}$ be the partial transpose of a $2\times 3$ X state, with $\bm{\rho}_{\rm X}$ parametrized as in Eq.~(\ref{eq:qubitqutritXstate}). We have already seen [cf. Eq.~(\ref{eq:lambdaprime})] that the six eigenvalues of $\bm{\rho}_{\rm X}^{\Gamma}$ are
\begin{equation}\label{eq:lambdaprime_repeat}
\lambda_k^{\prime\pm}=\frac{a_k+b_k}{2}\pm\sqrt{r_{4-k}^2+\left(\frac{a_k-b_k}{2}\right)^2},
\end{equation}
for $k=1,2,3$. Since $a_k,b_k\geq 0$, then that $\lambda_k^{\prime +} \geq 0$, and since $\lambda_2^{\prime -}$ matches one of the eigenvalues of $\bm{\rho}_{\rm X}$ [cf. Eq.~(\ref{eq:lambdas})], then it is nonnegative as well. Therefore, the only possibly negative eigenvalues are $\lambda_1^{\prime -}$ and $\lambda_3^{\prime -}$.

Now, suppose $\lambda_1^{\prime -}$ and $\lambda_3^{\prime -}$ are both negative. From Eq.~(\ref{eq:lambdaprime_repeat}), it is straightforward to show that $\lambda_k^{\prime -}<0$ is equivalent to $r_{4-k}>\sqrt{a_kb_k}$, in which case our hypothesis imposes the simultaneous fulfillment of the inequalities $r_3>\sqrt{a_1b_1}$ and $r_1>\sqrt{a_3 b_3}$. However, recall that the positive-semidefiniteness of $\bm{\rho}_{\rm X}$ requires that $r_3\leq\sqrt{a_3 b_3}$ and $r_1\leq\sqrt{a_1b_1}$ [cf. Eq.~(\ref{eq:qualifiers})] in such a way that the following inequalities must be simultaneously satisfied;
\begin{equation}
\sqrt{a_3b_3}\geq r_3>\sqrt{a_1b_1}\quad\mbox{and}\quad \sqrt{a_1b_1}\geq r_1>\sqrt{a_3 b_3}.
\end{equation}
These inequalities impose conflicting orderings between $\sqrt{a_1 b_1}$ and $\sqrt{a_3 b_3}$ and hence cannot be simultaneously satisfied. Therefore, either $\lambda_1^{\prime -}<0$ or $\lambda_3^{\prime -}<0$, that is, entangled $2\times 3$ X states have exactly one negative eigenvalue in their partial transposes.

\section{Optimal solution for problem~(\ref{eq:proboptdist})}\label{app:discopt}
Here, we prove the following proposition that can be directly applied to establish Eq.~(\ref{eq:onlyoptsol}) as the optimal solution for problem~(\ref{eq:proboptdist}) (with added constraints $\lambda_k^+\geq\lambda_k^-$ for $k\in\{1,2,3\}$).
\begin{proposition}\label{proposition1}
For any given set of real numbers $\{\lambda_\ell\}_{\ell=1}^6$ such that  $1\geq\lambda_1\geq\lambda_2\geq\lambda_3\geq\lambda_4\geq\lambda_5\geq\lambda_6\geq 0$, the largest value of
\begin{equation}
\mathcal{S}_{i,j,k,l}\coloneq -(\lambda_i+\lambda_j)+\sqrt{(\lambda_i-\lambda_j)^2+(\lambda_k-\lambda_l)^2},
\end{equation}
taken amongst all possible sequences $(i,j,k,l)$ that satisfy $\{i,j,k,l,m,n\}=\{1,2,3,4,5,6\}$, occurs for
\begin{equation}
\{i,j\}=\{4,6\}\quad\mbox{and}\quad \{k,l\}=\{1,5\}.
\end{equation}
\end{proposition}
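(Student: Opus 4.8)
The plan is to exploit that $\mathcal{S}_{i,j,k,l}$ depends only on the two \emph{unordered} pairs $\{\lambda_i,\lambda_j\}$ and $\{\lambda_k,\lambda_l\}$, and to write it as $\mathcal{S}=-(s+t)+\sqrt{(s-t)^2+(p-q)^2}$, where $\{s,t\}=\{\lambda_i,\lambda_j\}$ is the ``subtracted pair'' and $\{p,q\}=\{\lambda_k,\lambda_l\}$ with $p\geq q$ is the ``free pair'' (appearing only under the radical). A direct differentiation gives $\partial_p\mathcal{S}\geq 0$, $\partial_q\mathcal{S}\leq 0$, and $\partial_s\mathcal{S},\partial_t\mathcal{S}\leq 0$; the last two hold because $(s-t)/\sqrt{(s-t)^2+(p-q)^2}\leq 1$ irrespective of the sign of $s-t$. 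Hence $\mathcal{S}$ never decreases when the larger free entry $p$ is raised, the smaller free entry $q$ is lowered, or either subtracted entry is lowered. Since each such move replaces an active value by a smaller or larger \emph{available} one, these translate into valid exchange steps on the index set $\{1,\dots,6\}$, and I would use them to collapse the $90$-configuration search.

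First I would fix the support. Because $\lambda_1$ is the global maximum, if it is used as $q$ then $p=\lambda_1$ automatically; and if it is unused or sits in a subtracted slot, moving it into the $p$ slot (sending the displaced value to the vacated slot) cannot decrease $\mathcal{S}$. Either way we may take $p=\lambda_1$. With $p=\lambda_1$ held fixed, the three remaining active values fill $q,s,t$, and $\mathcal{S}$ is nonincreasing as each of these is raised; so whenever an active value lies in $\{\lambda_2,\lambda_3\}$ while a smaller value of $\{\lambda_4,\lambda_5,\lambda_6\}$ is inactive, swapping them does not decrease $\mathcal{S}$. Iterating forces the three non-$\lambda_1$ active values to be exactly $\{\lambda_4,\lambda_5,\lambda_6\}$, i.e.\ the excluded pair is $\{\lambda_2,\lambda_3\}$.

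Only the assignment of $\{\lambda_4,\lambda_5,\lambda_6\}$ to the roles $(q;\{s,t\})$ then remains, giving three candidates $f_1,f_2,f_3$ according to whether $q=\lambda_4,\lambda_5,\lambda_6$; the claimed optimum is $f_2$ (free pair $\{\lambda_1,\lambda_5\}$, subtracted pair $\{\lambda_4,\lambda_6\}$, i.e.\ $\{k,l\}=\{1,5\}$ and $\{i,j\}=\{4,6\}$). Denoting the radicals in $f_2,f_3,f_1$ by $A,B,C$, rationalizing the differences yields the factorizations $f_2-f_1=(\lambda_4-\lambda_5)\!\left[\tfrac{2(\lambda_1-\lambda_6)}{A+C}-1\right]$ and $f_2-f_3=(\lambda_5-\lambda_6)\!\left[1-\tfrac{2(\lambda_1-\lambda_4)}{A+B}\right]$, whose signs follow from two one-sided norm estimates: $A+C\leq 2(\lambda_1-\lambda_6)$ from $\sqrt{x^2+y^2}\leq|x|+|y|$ (the four resulting gaps summing to exactly $2(\lambda_1-\lambda_6)$), and $A+B\geq 2(\lambda_1-\lambda_4)$ from discarding the smaller squared term so that $A\geq\lambda_1-\lambda_5\geq\lambda_1-\lambda_4$ and $B\geq\lambda_1-\lambda_6\geq\lambda_1-\lambda_4$. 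I expect this final comparison to be the main obstacle: one must spot the exact factorizations and the \emph{correct} one-sided bounds, since the crude estimate $\sqrt{x^2+y^2}\leq\sqrt{2}\max(|x|,|y|)$ is too weak and only the precise additive structure of the index gaps closes the argument. By contrast, the exchange reductions are routine once the derivative signs are recorded, and ties among the $\lambda$'s cause no difficulty because every step is an inequality saturated by the claimed configuration.
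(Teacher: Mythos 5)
Your proposal is correct, and it reaches the result by a genuinely different route than the paper. The paper's proof first restricts to $i<j$, $k<l$, notes that for a fixed subtracted pair $\{i,j\}$ the free pair should be the minimum and maximum of the complement, and is then left with fifteen candidate sequences; these are whittled down to $(4,6,1,5)$ through seven families of inequalities, each proved by rewriting the $\lambda$'s in terms of the consecutive gaps $\delta_i$ and invoking a separate lemma containing three radical inequalities. You instead record the monotonicity of $\mathcal{S}(s,t,p,q)=-(s+t)+\sqrt{(s-t)^2+(p-q)^2}$ in all four arguments (nondecreasing in the larger free entry, nonincreasing in the smaller free entry and in both subtracted entries --- the latter because $(s-t)/\sqrt{(s-t)^2+(p-q)^2}\leq 1$ regardless of sign), and use these as exchange moves to collapse the full configuration space to just three candidates: $\lambda_1$ must sit in the larger free slot and the remaining active values must be $\{\lambda_4,\lambda_5,\lambda_6\}$. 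This exchange argument is sound (each swap decomposes into single-variable moves of the right sign, and the ordering $p\geq q$ is preserved throughout). Your closing comparison also checks out: with $A,B,C$ the radicals of $\mathcal{S}_{4,6,1,5}$, $\mathcal{S}_{4,5,1,6}$, $\mathcal{S}_{5,6,1,4}$, one indeed has $A^2-C^2=2(\lambda_4-\lambda_5)(\lambda_1-\lambda_6)$ and $A^2-B^2=-2(\lambda_5-\lambda_6)(\lambda_1-\lambda_4)$, which give exactly your two factorizations, and the two one-sided bounds $A+C\leq 2(\lambda_1-\lambda_6)$ (subadditivity of $\sqrt{x^2+y^2}\leq|x|+|y|$, the four gaps telescoping) and $A+B\geq 2(\lambda_1-\lambda_4)$ (dropping a square under each radical) close both signs. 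What your route buys is economy: three final candidates instead of fifteen, no $\delta$-parametrization, and no auxiliary lemma; the two factorization identities replace the paper's seven case-by-case eliminations. What the paper's route buys is that every elimination step reduces to one of three reusable scalar inequalities, so no identity needs to be discovered --- only verified. One cosmetic point to add in a polished write-up: when $A+C=0$ (equivalently $\lambda_1=\lambda_4=\lambda_5=\lambda_6$) the factorized form of $f_2-f_1$ is formally $0/0$, but then all three candidates coincide trivially, so the degenerate case is harmless --- as you assert, ties never break the argument.
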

Before proceeding with a proof for this proposition, let us state and prove some useful inequalities.
\begin{lemma}\label{lemma1}
The inequalities
\begin{align}
a+\sqrt{(a+b)^2+(b+c)^2}&\geq\sqrt{b^2+(a+b+c)^2}\,,\label{eq:ineqabc}\\
\sqrt{(b+a)^2+(c+a)^2}&\geq \sqrt{b^2+c^2}+a\,,\label{eq:ineqabc2}\\
\sqrt{b^2+c^2}+a&\geq\sqrt{(b+a)^2+c^2}\,, \label{eq:ineqabc3}
\end{align}
hold for all nonnegative real numbers $a$, $b$, and $c$.
\end{lemma}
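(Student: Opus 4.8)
The plan is to prove each of the three inequalities separately by squaring both sides, a move justified throughout because the hypothesis $a,b,c\ge 0$ forces every left- and right-hand side (and every surviving radical) to be nonnegative, so squaring preserves the direction of the inequality. In each case I would expand the two squared expressions, cancel the common terms $a^2+b^2+c^2$ and the like, and reduce the claim to a manifestly nonnegative quantity; the cross terms are where the work concentrates.

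I would dispatch (\ref{eq:ineqabc3}) first, as it is the simplest. Squaring $\sqrt{b^2+c^2}+a\ge\sqrt{(b+a)^2+c^2}$ and cancelling reduces the claim to $2a(\sqrt{b^2+c^2}-b)\ge 0$, which holds since $\sqrt{b^2+c^2}\ge\sqrt{b^2}=b$. For (\ref{eq:ineqabc2}), squaring both sides of $\sqrt{(b+a)^2+(c+a)^2}\ge\sqrt{b^2+c^2}+a$ and simplifying leaves $a\,[\,a+2(b+c)-2\sqrt{b^2+c^2}\,]\ge 0$; this is nonnegative because the elementary bound $b+c\ge\sqrt{b^2+c^2}$ (equivalently $2bc\ge 0$) makes the bracket at least $a\ge 0$.

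The one genuinely delicate step is (\ref{eq:ineqabc}), because a single squaring does \emph{not} clear the radical. After squaring $a+\sqrt{(a+b)^2+(b+c)^2}\ge\sqrt{b^2+(a+b+c)^2}$ and expanding, the difference of the two squared sides collapses to $a^2-2ac+2a\sqrt{(a+b)^2+(b+c)^2}=a\,[\,a-2c+2\sqrt{(a+b)^2+(b+c)^2}\,]$. Rather than square a second time, which would be messier, I would control the leftover root directly: since $\sqrt{(a+b)^2+(b+c)^2}\ge\sqrt{(b+c)^2}=b+c\ge c$, the bracket is at least $a-2c+2c=a\ge 0$, so the whole expression is nonnegative and the inequality follows. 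I expect this final bound to be the main obstacle, but it is circumvented cheaply by discarding the $(a+b)^2$ term under the root rather than attempting a brute-force second squaring.
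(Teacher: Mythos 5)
Your proof is correct, and for inequalities (\ref{eq:ineqabc2}) and (\ref{eq:ineqabc3}) it coincides with the paper's own argument: a single squaring followed by the elementary bounds $b+c\geq\sqrt{b^2+c^2}$ and $\sqrt{b^2+c^2}\geq b$, respectively. The only real divergence is in your treatment of (\ref{eq:ineqabc}), the one delicate case. After the first squaring, both you and the paper arrive at the same residual inequality, $a(a-2c)\geq -2a\sqrt{(a+b)^2+(b+c)^2}$. The paper then proceeds by case analysis: the inequality is immediate when $a=0$ or $a\geq 2c$, and in the remaining case $0<a<2c$ it is rewritten as $|a-2c|\leq 2\sqrt{(a+b)^2+(b+c)^2}$ and squared a \emph{second} time, reducing to the manifestly true $3a^2+8b(b+c)+4a(2b+c)\geq 0$. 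You instead avoid both the case split and the second squaring by bounding the surviving radical from below, $\sqrt{(a+b)^2+(b+c)^2}\geq b+c\geq c$, so that the bracket $a-2c+2\sqrt{(a+b)^2+(b+c)^2}$ is at least $a\geq 0$. This is a genuine, if modest, streamlining: the same one-line monotonicity bound that the paper reserves for the two easy inequalities also disposes of the hard one, with no loss of generality and with a shorter certificate of nonnegativity.
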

\begin{proof}(Lemma~\ref{lemma1})
We offer separate proofs for each inequality;
\begin{itemize}[leftmargin=*]
\item \emph{Inequality~(\ref{eq:ineqabc}).} Since both sides of~(\ref{eq:ineqabc}) are nonnegative, an equivalent inequality can be obtained by squaring~(\ref{eq:ineqabc}). After some straightforward manipulation we arrive at
\begin{equation}\label{eq:ineqabcsq1}
a(a-2c)\geq -2a\sqrt{(a+b)^2+(b+c)^2},
\end{equation} 
which is fulfilled if $a=0$ or $a\geq 2c$. Now, there only remains to prove~(\ref{eq:ineqabcsq1}) for $0<a<2c$. In this case, inequality~(\ref{eq:ineqabcsq1}) can be written as
\begin{equation}\label{eq:ineqabcsq2}
|a-2c|\leq 2\sqrt{(a+b)^2+(b+c)^2},
\end{equation} 
which can be squared to yield the equivalent inequality
\begin{equation}\label{eq:ineqabcsq3}
3a^2+8b(b+c)+4a(2b+c)\geq 0\,,
\end{equation}
which holds due to the nonnegativity of each term on its lhs.
\item \emph{Inequality~(\ref{eq:ineqabc2}).} As before, both sides of~(\ref{eq:ineqabc2}) are nonnegative and an equivalent inequality can be obtained by squaring them as
\begin{equation}
a[a+2(b+c)]\geq a(2\sqrt{b^2+c^2}).
\end{equation}
This inequality is saturated for $a=0$ and, for $a>0$, it takes the form
\begin{equation}
a+2(b+c)\geq 2\sqrt{b^2+c^2},
\end{equation}
which holds since $a+2(b+c)>2(b+c)\geq 2\sqrt{b^2+c^2}$ (triangle inequality).
\item \emph{Inequality~(\ref{eq:ineqabc3}).} Once again, owing to the nonnegativity of both sides of~(\ref{eq:ineqabc3}), an equivalent inequality can be obtained by squaring~(\ref{eq:ineqabc3}) as
\begin{equation}
2a\sqrt{b^2+c^2}\geq 2 a b,
\end{equation}
which is saturated when $a=0$ and, for $a>0$, is equivalent to the true statement $\sqrt{b^2+c^2}\geq b$.
\end{itemize}
\end{proof}

We now prove Proposition~\ref{proposition1}. Firstly, note that $\mathcal{S}_{i,j,k,l}$ is invariant under the exchanges of $i\leftrightarrow j$ and $k\leftrightarrow l$, in such a way that it suffices to search for an optimal sequence $(i,j,k,l)$ with $i<j$ and $k<l$. Secondly, for fixed values of $i$ and $j$, the values of $k$ and $l$ (such that $k<l$) that maximize $S_{i,j,k,l}$ are
\begin{align}\label{eq:notationissueeq}
k&=\min[\{1,2,3,4,5,6\}\setminus\{i,j\}]\\
l&=\max[\{1,2,3,4,5,6\}\setminus\{i,j\}]
\end{align}
since such a choice maximizes $(\lambda_k-\lambda_l)^2$ for any fixed $i$ and $j$, where the notation here means that the values of $i$ and $j$ are omitted from the full set. With these consideration in mind, we are left with fifteen sequences of the form $(i,j,k,l)$ that can be considered candidates to the maximization of $S_{i,j,k,l}$, namely:  1. $(1,2,3,6)$; 2. $(1,3,2,6)$; 3. $(1,4,2,6)$; 4. $(1,5,2,6)$; 5. $(1,6,2,5)$; 6. $(2,3,1,6)$; 7. $(2,4,1,6)$; 8. $(2,5,1,6)$; 9. $(2,6,1,5)$; 10. $(3,4,1,6)$; 11. $(3,5,1,6)$; 12. $(3,6,1,5)$; 13. $(4,5,1,6)$; 14. $(4,6,1,5)$; 15. $(5,6,1,4)$. 

In order to establish sequence 14 as the maximizer of $S_{i,j,k,l}$, we take two main steps. First, we reduce the number of candidate sequences from fifteen to eight, by demonstrating that (i) $\mathcal{S}_{2,6,1,5}\geq \mathcal{S}_{2,j,1,6}$ for $j\in\{3,4,5\}$ (so that sequences 6, 7, and 8 can be disregarded); (ii) $\mathcal{S}_{3,6,1,5}\geq \mathcal{S}_{3,j,1,6}$ for $j\in\{4,5\}$ (so that sequences 10 and 11 can be disregarded); (iii) $\mathcal{S}_{4,6,1,5}\geq\mathcal{S}_{4,5,1,6}$ (so that sequence 13 can be disregarded); (iv) $\mathcal{S}_{4,6,1,5}\geq\mathcal{S}_{5,6,1,4}$ (so that sequence 15 can be disregarded). At this point, the only remaining candidate sequences are: 1. $(1,2,3,6)$; 2. $(1,3,2,6)$; 3. $(1,4,2,6)$; 4. $(1,5,2,6)$; 5. $(1,6,2,5)$; 9. $(2,6,1,5)$; 12. $(3,6,1,5)$; 14. $(4,6,1,5)$. 

Next, we discard all remaining sequences but the last by demonstrating that (v) $\mathcal{S}_{4,6,1,5}\geq\mathcal{S}_{1,j,k,6}$ for ($j=2$ and $k=3$) or ($j\in\{3,4,5\}$ and $k=2$) (so that sequences 1, 2, 3, and 4 can be disregarded); (vi) $\mathcal{S}_{4,6,1,5}\geq\mathcal{S}_{1,6,2,5}$ (so that sequences 5 can be disregarded) and (vii) $\mathcal{S}_{4,6,1,5}\geq\mathcal{S}_{i,6,1,5}$ for $i\in\{2,3\}$ (so that sequences 9 and 12 can be disregarded). 

In order to prove the aforementioned inequalities, it is convenient to reexpress the $\lambda$-parameters in terms of the $\delta$-distances between them, as implicitly defined in Fig.~\ref{fig:defdeltai}.
\begin{figure}[h]
\centering
\includegraphics[width=\columnwidth]{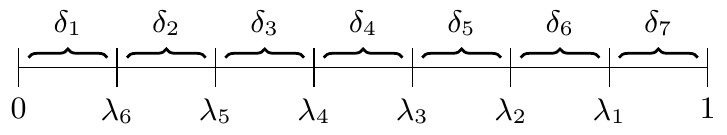}
\caption{Definition of $\delta_i$ for every $i\in\{1,2,3,4,5,6,7\}$. Note that the distances $\delta_i$ are not generally equal.}\label{fig:defdeltai}
\end{figure}
\noindent From that figure, we note that $\sum_{i=1}^7\delta_i=1$ and that $\delta_i\geq 0$ for every $i\in\{1,2,3,4,5,6,7\}$. Although Fig.~\ref{fig:defdeltai} shows all $\delta_i$ of the same length, this is not generally the case. In fact, different spectra will yield all possible orderings of $\delta_i$ (as opposed to $\lambda_i$, which are always, by definition, in descending order for ascending $i$). Moreover, any element of $\{\lambda_i\}_{i=1}^6$ can be written in terms of $\delta$-distances as
\begin{equation}
\lambda_i=\sum_{p=1}^{7-i}\delta_p,
\end{equation}
in which case, the expression of $\mathcal{S}_{i,j,k,l}$ with $i<j$ and $k<l$ takes the following form
\onecolumngrid
\begin{equation}\label{eq:Sijkldelta}
\mathcal{S}_{i,j,k,l}=-\sum_{p=7-(j-1)}^{7-i}\delta_p-2\sum_{q=1}^{7-j}\delta_q+\sqrt{\left(\sum_{p=7-(j-1)}^{7-i}\delta_p\right)^2+\left(\sum_{r=7-(l-1)}^{7-k}\delta_r\right)^2}\,.
\end{equation}

With these provisions, we are ready to prove the aforementioned inequalities
\begin{itemize}[leftmargin=*]
\item[i.]
Proof that $\mathcal{S}_{2,6,1,5}\geq \mathcal{S}_{2,j,1,6}$ for $j\in\{3,4,5\}$. In terms of the $\delta$-distances, this inequality takes the form [cf. Eq.~(\ref{eq:Sijkldelta})]
\begin{equation}
-\sum_{p=2}^{5}\delta_p-2\delta_1+\sqrt{\left(\sum_{p=2}^{5}\delta_p\right)^2+\left(\sum_{r=3}^{6}\delta_r\right)^2}
\geq
-\sum_{p=7-(j-1)}^{5}\delta_p-2\sum_{q=1}^{7-j}\delta_q+\sqrt{\left(\sum_{p=7-(j-1)}^{5}\delta_p\right)^2+\left(\sum_{r=2}^{6}\delta_r\right)^2},
\end{equation}
which, letting $\Delta_{345}\coloneq \delta_3+\delta_4+\delta_5$, can be trivially rearranged into
\begin{equation}\label{eq:ineqcaseiori}
\sum_{p=2}^{7-j}\delta_p+\sqrt{\left(\delta_2+\Delta_{345}\right)^2+\left(\Delta_{345}+\delta_6\right)^2}
\geq
\sqrt{\left(\sum_{p=7-(j-1)}^{5}\delta_p\right)^2+\left(\delta_2+\Delta_{345}+\delta_6\right)^2}.
\end{equation}
Now, note that for all possible values of $j\in\{3,4,5\}$, the case $j=5$ leads to the strongest inequality, as it minimizes the sum on the lhs and maximizes the sum on the rhs. So, if Eq.~(\ref{eq:ineqcaseiori}) holds for $j=5$, then it will necessarily hold for $j=3$ and $j=4$ as well. For this reason, it suffices to consider Eq.~(\ref{eq:ineqcaseiori}) with $j=5$, namely
\begin{equation}\label{eq:ineqcaseiori2}
\delta_2+\sqrt{\left(\delta_2+\Delta_{345}\right)^2+\left(\Delta_{345}+\delta_6\right)^2}
\geq
\sqrt{\Delta_{345}^2+\left(\delta_2+\Delta_{345}+\delta_6\right)^2}\,.
\end{equation}
The validity of Eq.~(\ref{eq:ineqcaseiori2}) is certified by Eq.~(\ref{eq:ineqabc}) (Lemma~\ref{lemma1}) with $a=\delta_2$, $b=\Delta_{345}$, and $c=\delta_6$. 

\item[ii.]
Proof that $\mathcal{S}_{3,6,1,5}\geq \mathcal{S}_{3,j,1,6}$ for $j\in\{4,5\}$. In terms of the $\delta$-distances, this inequality takes the form [cf. Eq.~(\ref{eq:Sijkldelta})]
\begin{equation}
-\sum_{p=2}^{4}\delta_p-2\delta_1+\sqrt{\left(\sum_{p=2}^{4}\delta_p\right)^2+\left(\sum_{r=3}^{6}\delta_r\right)^2}\geq -\sum_{p=7-(j-1)}^{4}\delta_p-2\sum_{q=1}^{7-j}\delta_q+\sqrt{\left(\sum_{p=7-(j-1)}^{4}\delta_p\right)^2+\left(\sum_{r=2}^{6}\delta_r\right)^2},
\end{equation}
which, letting $\Delta_{34}\coloneq \delta_3+\delta_4$ and $\Delta_{56}\coloneq\delta_5+\delta_6$, can be trivially rearranged into
\begin{equation}\label{eq:ineqcaseiiori}
\sum_{p=2}^{7-j}\delta_p+\sqrt{\left(\delta_2+\Delta_{34}\right)^2+\left(\Delta_{34}+\Delta_{56}\right)^2}
\geq
\sqrt{\left(\sum_{p=7-(j-1)}^{4}\delta_p\right)^2+\left(\delta_2+\Delta_{34}+\Delta_{56}\right)^2}\,.
\end{equation}
As in the previous case, although $j$ can take any value in the set $\{4,5\}$, it suffices to consider $j=5$ as this choice yields the strongest inequality,
\begin{equation}\label{eq:ineqcaseiiorij5}
\delta_2+\sqrt{\left(\delta_2+\Delta_{34}\right)^2+\left(\Delta_{34}+\Delta_{56}\right)^2}
\geq
\sqrt{\Delta_{34}^2+\left(\delta_2+\Delta_{34}+\Delta_{56}\right)^2}\,.
\end{equation}
The validity of this inequality is certified by Eq.~(\ref{eq:ineqabc}) (Lemma~\ref{lemma1}) with $a=\delta_2$, $b=\Delta_{34}$, and $c=\Delta_{56}$.

\item[iii.]
Proof that $\mathcal{S}_{4,6,1,5}\geq \mathcal{S}_{4,5,1,6}$. In terms of the $\delta$-distances, this inequality takes the form [cf. Eq.~(\ref{eq:Sijkldelta})]
\begin{equation}
-\sum_{p=2}^{3}\delta_p-2\delta_1+\sqrt{\left(\sum_{p=2}^{3}\delta_p\right)^2+\left(\sum_{r=3}^{6}\delta_r\right)^2}\geq -\delta_3-2\sum_{q=1}^{2}\delta_q+\sqrt{\delta_3^2+\left(\sum_{r=2}^{6}\delta_r\right)^2},
\end{equation}
which, letting $\Delta_{456}\coloneq \delta_4+\delta_5+\delta_6$, can be trivially rearranged into
\begin{equation}\label{eq:itemiiiorinosq}
\delta_2+\sqrt{(\delta_2+\delta_3)^2+(\delta_3+\Delta_{456})^2}\geq\sqrt{\delta_3^2+(\delta_2+\delta_3+\Delta_{456})^2}.
\end{equation}
The validity of Eq.~(\ref{eq:itemiiiorinosq}) is certified by Eq.~(\ref{eq:ineqabc}) (Lemma~\ref{lemma1}) with $a=\delta_2$, $b=\delta_{3}$, and $c=\Delta_{456}$.

\item[iv.]
Proof that $\mathcal{S}_{4,6,1,5}\geq \mathcal{S}_{5,6,1,4}$. In terms of the $\delta$-distances, this inequality takes the form [cf. Eq.~(\ref{eq:Sijkldelta})]

\begin{equation}
-\sum_{p=2}^{3}\delta_p-2\delta_1+\sqrt{\left(\sum_{p=2}^{3}\delta_p\right)^2+\left(\sum_{r=3}^{6}\delta_r\right)^2}\geq -\delta_2-2\delta_1+\sqrt{\delta_2^2+\left(\sum_{r=4}^{6}\delta_r\right)^2},
\end{equation}
which, letting $\Delta_{456}\coloneq \delta_4+\delta_5+\delta_6$, can be trivially rearranged as
\begin{equation}\label{eq:ineqorinosq}
  \sqrt{(\delta_2+\delta_3)^2+\left(\Delta_{456}+\delta_3\right)^2}\geq \sqrt{\delta_2^2+\Delta_{456}^2}+ \delta_3 .
\end{equation}
The validity of Eq.~(\ref{eq:ineqorinosq}) is certified by Eq.~(\ref{eq:ineqabc2}) (Lemma~\ref{lemma1}) with $a=\delta_3$, $b=\delta_{2}$, and $c=\Delta_{456}$.

\item[v.] Proof that $\mathcal{S}_{4,6,1,5}\geq \mathcal{S}_{1,j,k,6}$ for ($j=2$ and $k=3$) or ($j\in\{3,4,5\}$ and $k=2$). In terms of the $\delta$-distances, this inequality takes the form [cf. Eq.~(\ref{eq:Sijkldelta})]
\begin{equation}
-\sum_{p=2}^{3}\delta_p-2\delta_1+\sqrt{\left(\sum_{p=2}^{3}\delta_p\right)^2+\left(\sum_{r=3}^{6}\delta_r\right)^2}\geq
-\sum_{p=7-(j-1)}^{6}\delta_p-2\sum_{q=1}^{7-j}\delta_q+\sqrt{\left(\sum_{p=7-(j-1)}^{6}\delta_p\right)^2+\left(\sum_{r=2}^{7-k}\delta_r\right)^2},
\end{equation}
which, letting $\Delta_{23}\coloneq\delta_2+\delta_3$, $\Delta_{456}\coloneq \delta_4+\delta_5+\delta_6$, and $\Delta_{3456}\coloneq \delta_3+\delta_4+\delta_5+\delta_6$, can be trivially rearranged into
\begin{equation}\label{eq:ineqcaseviori}
\Delta_{456}+\sum_{p=2}^{7-j}\delta_p+\sqrt{\Delta_{23}^2+\Delta_{3456}^2}
\geq
\sqrt{\left(\sum_{p=7-(j-1)}^6\delta_p\right)^2+\left(\sum_{r=2}^{7-k}\delta_r\right)^2}.
\end{equation}
Now, it suffices to prove this inequality for the allowed values of $j$ and $k$ that minimize the lhs and maximize the rhs, namely $j=5$ and $k=2$, which gives
\begin{equation}
\Delta_{456}+\delta_2+\sqrt{\Delta_{23}^2+\Delta_{3456}^2}
\geq
\sqrt{\Delta_{3456}^2+\Delta_{2345}^2}\,.
\end{equation}
To see that this holds, observe that
\begin{equation}
\Delta_{456}+\delta_2+\sqrt{\Delta_{23}^2+\Delta_{3456}^2}\geq \Delta_{456}+\sqrt{\Delta_{23}^2+\Delta_{3456}^2}\geq \sqrt{(\Delta_{23}+\Delta_{456})^2+\Delta_{3456}^2}\geq \sqrt{\Delta_{2345}^2+\Delta_{3456}^2}
\end{equation}
where the first inequality follows from the fact that $\delta_2\geq 0$, the last inequality follows from the fact that $\delta_6\geq 0$ and the intermediary inequality follows from Eq.~(\ref{eq:ineqabc3}) (Lemma~\ref{lemma1}) with $a=\Delta_{456}$, $b=\Delta_{23}$, and $c=\Delta_{3456}$.

\item[vi.] Proof that $\mathcal{S}_{4,6,1,5}\geq \mathcal{S}_{1,6,2,5}$. In terms of the $\delta$-distances, this inequality takes the form [cf. Eq.~(\ref{eq:Sijkldelta})]
\begin{equation}
-\sum_{p=2}^{3}\delta_p-2\delta_1+\sqrt{\left(\sum_{p=2}^{3}\delta_p\right)^2+\left(\sum_{r=3}^{6}\delta_r\right)^2}\geq
-\sum_{p=2}^{6}\delta_p-2\delta_1+\sqrt{\left(\sum_{p=2}^{6}\delta_p\right)^2+\left(\sum_{r=3}^{5}\delta_r\right)^2},
\end{equation}
which, letting $\Delta_{23}\coloneq\delta_2+\delta_3$, $\Delta_{345}\coloneq \delta_3+\delta_4+\delta_5$, and $\Delta_{456}\coloneq \delta_4+\delta_5+\delta_6$, can be trivially rearranged into
\begin{equation}\label{eq:ineqcaseviori2}
\Delta_{456}+\sqrt{\Delta_{23}^2+\left(\Delta_{345}+\delta_6\right)^2}
\geq
\sqrt{\left(\Delta_{23}+\Delta_{456}\right)^2+\Delta_{345}^2}.
\end{equation}
To see that this inequality holds, note that
\begin{equation}
\Delta_{456}+\sqrt{\Delta_{23}^2+\left(\Delta_{345}+\delta_6\right)^2}
\geq
\Delta_{456}+\sqrt{\Delta_{23}^2+\Delta_{345}^2}
\geq
\sqrt{\left(\Delta_{23}+\Delta_{456}\right)^2+\Delta_{345}^2},
\end{equation}
where the first inequality follows from the fact that $\delta_6\geq 0$ and the second is certified by Eq.~(\ref{eq:ineqabc3}) (Lemma~\ref{lemma1}) with $a=\Delta_{456}$, $b=\Delta_{23}$, and $c=\Delta_{345}$.

\item[vii.] Proof that $\mathcal{S}_{4,6,1,5}\geq \mathcal{S}_{i,6,1,5}$ for $i\in\{2,3\}$. In terms of the $\delta$-distances, this inequality takes the form [cf. Eq.~(\ref{eq:Sijkldelta})]
\begin{equation}
-\sum_{p=2}^{3}\delta_p-2\delta_1+\sqrt{\left(\sum_{p=2}^{3}\delta_p\right)^2+\left(\sum_{r=3}^{6}\delta_r\right)^2}\geq
-\sum_{p=2}^{7-i}\delta_p-2\delta_1+\sqrt{\left(\sum_{p=2}^{7-i}\delta_p\right)^2+\left(\sum_{r=3}^{6}\delta_r\right)^2},
\end{equation}
which, letting $\Delta_{23}\coloneq \delta_2+\delta_3$ and $\Delta_{3456}\coloneq \delta_3+\delta_4+\delta_5+\delta_6$, can be trivially rearranged into
\begin{equation}\label{eq:ineqcaseviiori}
\sum_{p=4}^{7-i}\delta_p+\sqrt{\Delta_{23}^2+\Delta_{3456}^2}
\geq
\sqrt{\left(\sum_{p=2}^{7-i}\delta_p\right)^2+\Delta_{3456}^2}.
\end{equation}
The validity of Eq.~(\ref{eq:ineqcaseviiori}) is certified by Eq.~(\ref{eq:ineqabc3}) (Lemma~\ref{lemma1}) with $a=\sum_{p=4}^{7-i}\delta_p$, $b=\Delta_{23}$, and $c=\Delta_{3456}$.
\end{itemize}
\twocolumngrid
\section{Optimality proofs for rank-deficient and degenerate X-MEMS wrt purity}\label{app:XMEMSwrtP}

\subsection{Proof of Theorem~\ref{thm:rank2}}

We start by applying the extra constraints $\lambda_3=\lambda_4=\lambda_5=\lambda_6=0$ to optimization problem (\ref{eq:optgeneral}), which yields the quadratically constrained linear program (QCLP),
\begin{align}
\mbox{maximize}&\quad \lambda_1\nonumber\\
\mbox{such that}&\quad \lambda_{2}\geq 0\,,\quad\lambda_1+\lambda_2=1\,,\quad \lambda_1^2+\lambda_2^2\leq P\,.\label{eq:optrank2_1}
\end{align}
Since $\lambda_6$ is no longer a variable (we have set $\lambda_6=0$), we have replaced the constraint $\lambda_6\geq 0$ from problem~(\ref{eq:optgeneral}) with the constraint of nonnegativity of the smallest variable of the current problem; $\lambda_2\geq 0$.

Solving the equality constraint for $\lambda_2$ and applying the solution $\lambda_2=1-\lambda_1$ to the remaining constraints gives the single-variable QCLP,
\begin{align}
\mbox{maximize}&\quad \lambda_1\nonumber\\
\mbox{such that}&\quad 1-\lambda_{1}\geq 0\,,\quad P-1+2\lambda_1-2\lambda_1^2\geq 0\,.\label{eq:optrank2_2}
\end{align}
Being the quadratic constraint of the (Schur complement) form $A_{22}-A_{12}^\dagger A_{11}^{-1} A_{12}\geq 0$, with $A_{22}\coloneq P-1+2\lambda_1$, $A_{12}\coloneq\lambda_1$ and $A_{11}^{-1}\coloneq 2$, Eq.~(\ref{eq:optrank2_2}) can be rewritten as the linear matrix inequality $A\geq 0$~\cite[pp. 650,651]{04Boyd} ($A$ is the block matrix made up of $A_{11}$, $A_{12}$, $A_{21}=A_{12}^\dagger$ and $A_{22}$), in which case problem~(\ref{eq:optrank2_2}) becomes
\begin{align}
\mbox{maximize}&\quad \lambda_1\nonumber\\
\mbox{such that}&\quad \left[\begin{array}{c|cc}
1-\lambda_1&\cdot&\cdot\\\hline
\cdot&\frac{1}{2}&\lambda_1\\
\cdot&\lambda_1&P-1+2\lambda_1
\end{array}\right]\geq 0\,.\label{eq:optrnk2_step3}
\end{align}

Now, problem~(\ref{eq:optrnk2_step3}) is an SDP and is taken to be our primal problem. In the SDP inequality form, it is
\begin{equation}\label{eq:optrnk2_step4}
-\min_{\lambda_1}\quad -\lambda_1\quad\mbox{such that} \quad \bm{F}_0 + \lambda_1\bm{F}_1 \geq 0\,,
\end{equation}
where we have defined
\begin{equation}
\bm{F}_0\coloneq \left[\begin{array}{c|cc}
1&\cdot&\cdot\\\hline
\cdot&\frac{1}{2}&\cdot\\
\cdot&\cdot&P-1
\end{array}\right]\quad\mbox{and}\quad
\bm{F}_1\coloneq \left[\begin{array}{c|cc}
-1&\cdot&\cdot\\\hline
\cdot&\cdot&1\\
\cdot&1&2
\end{array}\right]\,.
\end{equation}
It is simple to check that $\widetilde{\lambda}_1$ [cf. Eq.~(\ref{eq:lambda1lambda2rank2})] is a primal feasible point\footnote{The spectrum of $\bm{F}_0 + \widetilde{\lambda}_1\bm{F}_1$ is $\{0, (1-f_P)/2, 1+f_P+f_P^2/2\}$, which is nonnegative for $P~\in~[1/2,1[$.} yielding the primal feasible value $N_{{\rm X},P}^{(2)}$ [cf. Eq.~(\ref{eq:NXP2})]. Next, relying on duality arguments, we show that $\widetilde{\lambda}_1$ is actually an optimal primal solution.

The associated dual problem (in the matrix variable $\bm{Z}$) written in the SDP standard form is 
\begin{equation}
-\max_{\bm{Z}} -\tr[\bm{F}_0\bm{Z}]
\quad\mbox{such that } \left\{
\begin{array}{rcl}
\bm{Z}&\geq& 0\,,\\
\tr[\bm{F}_1 \bm{Z}]&=&-1\,.
\end{array}\right.
\end{equation}
According to the weak-duality property for SDPs, the optimality of $\widetilde{\lambda}_1$ is certified by the existence of a matrix $\widetilde{\bm{Z}}$ that complies with the dual problem constraints and satisfies $\tr[\bm{F}_0\widetilde{\bm{Z}}]=N_{{\rm X},P}^{(2)}$.

For $P=1/2$, consider
\begin{equation}\label{eq:Zpmeio}
\widetilde{\bm{Z}}= \left[\begin{array}{c|cc}
\cdot&\cdot&\cdot\\\hline
\cdot&\mathfrak{z}&\frac{1}{2}-\mathfrak{z}\\
\cdot&\frac{1}{2}-\mathfrak{z}&\mathfrak{z}-1
\end{array}\right]\!,
\end{equation}
in which case, regardless of the value of $\mathfrak{z}\in\mathbb{R}$, it follows that $\tr[\bm{F}_1\widetilde{\bm{Z}}]=-1$ and $\tr[\bm{F}_0\widetilde{\bm{Z}}]=N_{{\rm X},\frac{1}{2}}^{(2)}=\frac{1}{2}$. Moreover, the nonzero eigenvalues of matrix (\ref{eq:Zpmeio}) are
\begin{equation}
\Lambda_\pm(\mathfrak{z})=\mathfrak{z}-\frac{1}{2}\pm\sqrt{\mathfrak{z}^2-\left(\mathfrak{z}-\frac{1}{2}\right)}\,,
\end{equation}
which can be easily shown to satisfy
\begin{equation}
\Lambda_+(\mathfrak{z})\geq 0 \quad\mbox{and}\quad \lim_{\mathfrak{z}\to\infty}\Lambda_-(\mathfrak{z})=0\,.
\end{equation}
Therefore, the matrix inequality $\widetilde{\bm{Z}}\geq 0$ holds asymptotically (as $\mathfrak{z}\to\infty$).

For $P~\in~]\frac{1}{2},1[$, consider
\begin{equation}
\widetilde{\bm{Z}}= \left[\begin{array}{c|cc}
\cdot&\cdot&\cdot\\\hline
\cdot&1+\frac{f_P}{2}+\frac{1}{2f_P}&\T -\frac{1}{2}\left(1+\frac{1}{f_P}\right)\\
\cdot&\T-\frac{1}{2}\left(1+\frac{1}{f_P}\right)&\frac{1}{2f_P}
\end{array}\right]\!.
\end{equation}
The only nonzero eigenvalue of $\widetilde{\bm{Z}}$ is $1+f_P/2+1/f_P$, which is a strictly positive real function for $P>1/2$. Besides, it is straightforward to check that also in this case $\tr[\bm{F}_1\widetilde{\bm{Z}}]=-1$ and $\tr[\bm{F}_0\widetilde{\bm{Z}}]=N_{{\rm X},P}^{(2)}$. Thus, $\widetilde{\lambda}_1$ is a primal optimal solution.

We conclude by noting that $\widetilde{\lambda}_2$ given in Eq.~(\ref{eq:lambda1lambda2rank2}) is simply $1-\widetilde{\lambda}_1$ (normalization) and that Eq.~(\ref{eq:rhoXP2}) is simply Eq.~(\ref{eq:egXMEMSwrtspec}) with $\lambda_3$, $\lambda_4$, $\lambda_5$, and $\lambda_6$ set to zero and $\lambda_1$ and $\lambda_2$ set to $\widetilde{\lambda}_1$ and $\widetilde{\lambda}_2$, respectively.

\subsection{Proof of Theorem~\ref{thm:rank3}}

Applying the extra constraints $\lambda_4=\lambda_5=\lambda_6=0$ to optimization problem (\ref{eq:optgeneral}) and following analogous steps as those taken in the proof of Theorem~\ref{thm:rank2} (i.e., replacing $\lambda_6\geq 0$ with $\lambda_3\geq 0$, using the normalization condition $\lambda_3=1-\lambda_1-\lambda_2$ to eliminate the variable $\lambda_3$, and recognizing a Schur complement to rewrite the remaining constraints as an LMI), problem (\ref{eq:optgeneral}) takes the form of an SDP (primal problem) which, in the inequality form, is given by 
\begin{equation}\label{eq:optrnk3_step4}
-\min_{\lambda_1,\lambda_2}\quad\!\! -\lambda_1\quad\mbox{such that} \quad \bm{F}_0 + \lambda_1\bm{F}_1+\lambda_2\bm{F}_2 \geq 0,
\end{equation}
where 
\begin{equation}
\bm{F}_0\coloneq \left[\begin{array}{c|ccc}
1&\cdot&\cdot&\cdot\\\hline
\cdot&\T\frac{2}{3}&-\frac{1}{3}&\cdot\\
\cdot&\T-\frac{1}{3}&\frac{2}{3}&\cdot\\
\cdot&\cdot&\cdot&P-1
\end{array}\right]\!,\nonumber
\end{equation}
\begin{equation}
\bm{F}_1\coloneq \left[\begin{array}{c|ccc}
-1&\cdot&\cdot&\cdot\\\hline
\cdot&\cdot&\cdot&1\\
\cdot&\cdot&\cdot&\cdot\\
\cdot&1&\cdot&2
\end{array}\right]\!,\quad\mbox{and}\quad
\bm{F}_2\coloneq \left[\begin{array}{c|ccc}
-1&\cdot&\cdot&\cdot\\\hline
\cdot&\cdot&\cdot&\cdot\\
\cdot&\cdot&\cdot&1\\
\cdot&\cdot&1&2
\end{array}\right]\!.
\end{equation}
A straightforward computation of the eigenvalues of $\bm{F}_0+\widetilde{\lambda}_1\bm{F}_1+\widetilde{\lambda}_2\bm{F}_2$\footnote{The spectrum of $\bm{F}_0+\widetilde{\lambda}_1\bm{F}_1+\widetilde{\lambda}_2\bm{F}_2$ is $\{0, (2-g_P)/6,\alpha_-,\alpha_+\}$, where $$\alpha_{\pm}\coloneq\frac{1}{12}\left(12+2g_P+g_P^2\pm g_P\sqrt{16+4g_P+g_P^2}\right)\,.$$ All of these eigenvalues can be easily shown to be nonnegative for $P\in[1/3,1[$.} shows that ($\widetilde{\lambda}_1$,$\widetilde{\lambda}_2$) [cf. Eqs.~(\ref{eq:lambda1lambda2rank3})] is a primal feasible point with a primal feasible value $N_{{\rm X},P}^{(3)}$ [cf. Eq.~(\ref{eq:NXP3})]. To see that this is actually an optimal primal point, consider the associated dual problem,
\begin{equation}
-\max_{\bm{Z}} -\tr[\bm{F}_0\bm{Z}]
\quad\mbox{such that } \left\{
\begin{array}{rcl}
\bm{Z}&\geq& 0\,,\\
\tr[\bm{F}_1 \bm{Z}]&=&-1\,,\\
\tr[\bm{F}_2 \bm{Z}]&=&0\,.
\end{array}\right.
\end{equation}
In this framework, the optimality of ($\widetilde{\lambda}_1$,$\widetilde{\lambda}_2$) is certified by the existence of a matrix $\widetilde{\bm{Z}}$ that satisfies the dual constraints and satisfies $\tr[\bm{F}_0\widetilde{\bm{Z}}]=N_{{\rm X},P}^{(3)}$.

For $P=1/3$, consider
\begin{equation}\label{eq:Zpmeio2}
\widetilde{\bm{Z}}\coloneq \left[\begin{array}{c|ccc}
\cdot&\cdot&\cdot&\cdot\\\hline
\cdot&\mathfrak{z}&\T\mathfrak{z}-\frac{1}{2}&-\mathfrak{z}+\frac{1}{2}\\
\cdot&\mathfrak{z}-\frac{1}{2}&\mathfrak{z}-1&-\mathfrak{z}+1\\
\cdot&-\mathfrak{z}+\frac{1}{2}&-\mathfrak{z}+1&\mathfrak{z}-1
\end{array}\right]\!,
\end{equation}
which, regardless of the value of $\mathfrak{z}\in\mathbb{R}$, satisfies $\tr[\bm{F}_1\widetilde{\bm{Z}}]=-1$, $\tr[\bm{F}_2\widetilde{\bm{Z}}]=0$, and $\tr[\bm{F}_0\widetilde{\bm{Z}}]=N_{{\rm X},\frac{1}{3}}^{(3)}=\frac{1}{3}$. Moreover, the nonzero eigenvalues of matrix (\ref{eq:Zpmeio2}) are
\begin{equation}
\Lambda_\pm(\mathfrak{z})=\frac{3\mathfrak{z}}{2}-1\pm\frac{\sqrt{3}}{2}\sqrt{3\mathfrak{z}^2-4\mathfrak{z}+2}\,,
\end{equation}
which can be easily shown to satisfy
\begin{equation}
\Lambda_+(\mathfrak{z})\geq 0 \quad\mbox{and}\quad \lim_{\mathfrak{z}\to\infty}\Lambda_-(\mathfrak{z})=0\,.
\end{equation}
Therefore, the matrix inequality $\widetilde{\bm{Z}}\geq 0$ holds asymptotically (as $\mathfrak{z}\to\infty$).

For $P\in\,]1/3,1[$, consider
\begin{equation}
\widetilde{\bm{Z}}\coloneq \left[\begin{array}{c|ccc}
\cdot&\cdot&\cdot&\cdot\\\hline
\cdot&\T{\scriptstyle 1}+\frac{g_P}{4}+\frac{1}{g_P}&\frac{1}{2}+\frac{1}{g_P}&-\frac{1}{2}-\frac{1}{g_P}\\
\cdot&\frac{1}{2}+\frac{1}{g_P}&\frac{1}{g_P}&-\frac{1}{g_P}\\
\cdot&-\frac{1}{2}-\frac{1}{g_P}&-\frac{1}{g_P}&\frac{1}{g_P}\\
\end{array}\right]\!.
\end{equation}
The only nonzero eigenvalue of $\widetilde{\bm{Z}}$ is $1+\frac{g_P}{4}+\frac{3}{g_P}$, which is a strictly positive real function for $P>1/3$. Besides, it is straightforward to check that also in this case $\tr[\bm{F}_1\widetilde{\bm{Z}}]=-1$, $\tr[\bm{F}_2\widetilde{\bm{Z}}]=0$ and $\tr[\bm{F}_0\widetilde{\bm{Z}}]=N_{{\rm X},P}^{(3)}$. Thus, ($\widetilde{\lambda}_1$,$\widetilde{\lambda}_2$) indeed constitutes a primal optimal point.

We conclude by noting that $\widetilde{\lambda}_3$ given in Eq.~(\ref{eq:lambda1lambda2rank3}) is simply $1-\widetilde{\lambda}_1-\widetilde{\lambda}_2$ (normalization) and that Eq.~(\ref{eq:rhoXP3}) is simply Eq.~(\ref{eq:egXMEMSwrtspec}) with $\lambda_4$, $\lambda_5$ and $\lambda_6$ set to zero and $\lambda_1$, $\lambda_2$ and $\lambda_3$ set to $\widetilde{\lambda}_1$, $\widetilde{\lambda}_2$, and $\widetilde{\lambda}_3$, respectively.

\onecolumngrid
\subsection{Proof of Theorem~\ref{thm:deg}}

Once again, we closely follow the steps taken in the proofs of Theorems~\ref{thm:rank2} and \ref{thm:rank3}. First, we apply the extra constraints $\lambda\coloneq \lambda_4=\lambda_5=\lambda_6$ to optimization problem (\ref{eq:optgeneral}), which linearizes the objective function to $\lambda_1-3\lambda$. Then, we replace the constraint $\lambda_6\geq 0$ with $\lambda\geq 0$ and, afterwards, we eliminate the variable $\lambda$ by means of the normalization condition $3\lambda=1-\lambda_1-\lambda_2-\lambda_3$. Finally, recognizing a Schur complement and rewriting the inequality constraints as an LMI, we are left with the following SDP (primal problem):
\begin{equation}
-1-\min_{\lambda_1,\lambda_2,\lambda_3} -2\lambda_1-\lambda_2-\lambda_3\quad\mbox{such that } \bm{F}_0+\lambda_1\bm{F}_1+\lambda_2\bm{F}_2+\lambda_3\bm{F}_3\geq 0,\label{eq:optdeg_step4}
\end{equation}

with
\begin{equation}
\begin{array}{cccc}
\bm{F}_0\coloneq \left[\begin{array}{c|cccc}
1&\cdot&\cdot&\cdot&\cdot\\\hline
\cdot&\T\frac{5}{6}&-\frac{1}{6}&-\frac{1}{6}&\cdot\\
\cdot&\T-\frac{1}{6}&\frac{5}{6}&-\frac{1}{6}&\cdot\\
\cdot&\T-\frac{1}{6}&-\frac{1}{6}&\frac{5}{6}&\cdot\\
\cdot&\cdot&\cdot&\cdot&P-\frac{1}{3}
\end{array}\right]\!,&
\bm{F}_1\coloneq \left[\begin{array}{c|cccc}
-1&\cdot&\cdot&\cdot&\cdot\\\hline
\cdot&\cdot&\cdot&\cdot&1\\
\cdot&\cdot&\cdot&\cdot&\cdot\\
\cdot&\cdot&\cdot&\cdot&\cdot\\
\cdot&1&\cdot&\cdot&\frac{2}{3}
\end{array}\right]\!,& 
\bm{F}_2\coloneq \left[\begin{array}{c|cccc}
-1&\cdot&\cdot&\cdot&\cdot\\\hline
\cdot&\cdot&\cdot&\cdot&\cdot\\
\cdot&\cdot&\cdot&\cdot&1\\
\cdot&\cdot&\cdot&\cdot&\cdot\\
\cdot&\cdot&1&\cdot&\frac{2}{3}
\end{array}\right]\!,&\mbox{and}\quad
\bm{F}_3\coloneq \left[\begin{array}{c|cccc}
-1&\cdot&\cdot&\cdot&\cdot\\\hline
\cdot&\cdot&\cdot&\cdot&\cdot\\
\cdot&\cdot&\cdot&\cdot&\cdot\\
\cdot&\cdot&\cdot&\cdot&1\\
\cdot&\cdot&\cdot&1&\frac{2}{3}
\end{array}\right]\!.
\end{array}
\end{equation}
Computation of the eigenvalues of $\bm{F}_0+\widetilde{\lambda}_1\bm{F}_1+\widetilde{\lambda}_2\bm{F}_2+\widetilde{\lambda}_3\bm{F}_3$ can be analytically carried out\footnote{For $P\in]1/5,3/8[$, the spectrum of $\bm{F}_0+\widetilde{\lambda}_1\bm{F}_1+\widetilde{\lambda}_2\bm{F}_2+\widetilde{\lambda}_3\bm{F}_3$ is $\{1, 0, \frac{1}{2}-h_P, \beta_-, \beta_+\}$, where 
\begin{displaymath}
\beta_\pm=\frac{1}{12}\left[10+4h_P+5h_P^2\pm\sqrt{4-16h_P+8h_P^2+40h_P^3+25h_P^4}\right]\nonumber\,.
\end{displaymath}
For $P\in[3/8,1[$, the spectrum is $\{1,0,0,\gamma_-,\gamma_+\}$, where
\begin{displaymath}
\gamma_\pm=\frac{1}{12}\left[13+g_P^2\pm\sqrt{1+14 g_P^2+g_P^4}\right]\,.
\end{displaymath}
In both cases, it is simple to show that the eigenvalues are always nonnegative in the relevant purity domains.
} to confirm $(\widetilde{\lambda}_1,\widetilde{\lambda}_2,\widetilde{\lambda}_3)$ [cf. Eqs.~(\ref{eq:lambdasdeg})] as a primal feasible point. Moreover, it is simple to show that the associated primal feasible value ($-1+2\widetilde{\lambda}_1+\widetilde{\lambda}_2+\widetilde{\lambda}_3$) matches $N_{{\rm X},P}^{(\text{deg})}$ [cf. Eq.~(\ref{eq:NXPdeg})]. 

As before, we now employ SDP weak duality to show that $(\widetilde{\lambda}_1,\widetilde{\lambda}_2,\widetilde{\lambda}_3)$ is an \emph{optimal} feasible point. The associated dual problem to primal problem~(\ref{eq:optdeg_step4}) is
\begin{equation}
-1-\max_{\bm{Z}} -\tr[\bm{F}_0\bm{Z}]
\quad\mbox{such that } \left\{
\begin{array}{rcl}
\bm{Z}&\geq& 0\,,\\
\tr[\bm{F}_1 \bm{Z}]&=&-2\,,\\
\tr[\bm{F}_2 \bm{Z}]&=&-1\,,\\
\tr[\bm{F}_3 \bm{Z}]&=&-1\,,
\end{array}\right.,
\end{equation}
and any $\widetilde{\bm{Z}}$ satisfying the dual constraints and $\tr[\bm{F}_0\widetilde{\bm{Z}}]=N_{{\rm X},P}^{(\text{deg})}$ will ensure the optimality of $(\widetilde{\lambda}_1,\widetilde{\lambda}_2,\widetilde{\lambda}_3)$.

For $P\in\,]1/5,3/8[$, consider
\begin{equation}
\widetilde{\bm{Z}}\coloneq \left[\begin{array}{c|cccc}
\cdot&\cdot&\cdot&\cdot&\cdot\\\hline
\cdot&\frac{2}{3} +{\scriptstyle h_P}+\frac{1}{9h_P}&\T\frac{1}{2}+\frac{h_P}{2}+\frac{1}{9h_P}&\frac{1}{2}+\frac{h_P}{2}+\frac{1}{9h_P}&-{\scriptstyle 1}-\frac{1}{3 h_P}\\
\cdot&\frac{1}{2}+\frac{h_P}{2}+\frac{1}{9h_P}&\frac{1}{3}+\frac{h_P}{4}+\frac{1}{9h_P}&\frac{1}{3}+\frac{h_P}{4}+\frac{1}{9h_P}&-\frac{1}{2}-\frac{1}{3 h_P}\\
\cdot&\frac{1}{2}+\frac{h_P}{2}+\frac{1}{9h_P}&\frac{1}{3}+\frac{h_P}{4}+\frac{1}{9h_P}&\frac{1}{3}+\frac{h_P}{4}+\frac{1}{9h_P}&-\frac{1}{2}-\frac{1}{3 h_P}\\
\cdot&-{\scriptstyle 1}-\frac{1}{3 h_P}&-\frac{1}{2}-\frac{1}{3 h_P}&-\frac{1}{2}-\frac{1}{3 h_P}&\frac{1}{h_P}
\end{array}\right]\!.
\end{equation}
The only nonzero eigenvalue of $\widetilde{\bm{Z}}$ is $\frac{4}{3}+\frac{3h_P}{2}+\frac{4}{3 h_P}$, which is a strictly positive real function for $P>1/6$. In addition, it is elementary to show that $\tr[\bm{F}_1\widetilde{\bm{Z}}]=2\tr[\bm{F}_2\widetilde{\bm{Z}}]=2\tr[\bm{F}_3\widetilde{\bm{Z}}]=-2$ and $\tr[\bm{F}_0\widetilde{\bm{Z}}]=N_{{\rm X},P}^{(\text{deg})}$, as required. 

For $P\in\,[3/8,1[$, consider
\begin{equation}
\widetilde{\bm{Z}}\coloneq \left[\begin{array}{c|cccc}
\frac{4}{3}-\frac{2}{3g_P}&\cdot&\cdot&\cdot&\cdot\\\hline
\cdot&\T\frac{4}{9}+\frac{g_P}{9}+\frac{4}{9g_P}&\frac{1}{9}-\frac{g_P}{18}+\frac{4}{9g_P}&\frac{1}{9}-\frac{g_P}{18}+\frac{4}{9g_P}&-\frac{1}{3}-\frac{2}{3 g_P}\\
\cdot&\frac{1}{9}-\frac{g_P}{18}+\frac{4}{9g_P}&-\frac{2}{9}+\frac{g_P}{36}+\frac{4}{9g_P}&-\frac{2}{9}+\frac{g_P}{36}+\frac{4}{9g_P}&\frac{1}{6}-\frac{2}{3 g_P}\\
\cdot&\frac{1}{9}-\frac{g_P}{18}+\frac{4}{9g_P}&-\frac{2}{9}+\frac{g_P}{36}+\frac{4}{9g_P}&-\frac{2}{9}+\frac{g_P}{36}+\frac{4}{9g_P}&\frac{1}{6}-\frac{2}{3 g_P}\\
\cdot&-\frac{1}{3}-\frac{2}{3 g_P}&\frac{1}{6}-\frac{2}{3 g_P}&\frac{1}{6}-\frac{2}{3 g_P}&\frac{1}{g_P}
\end{array}\right]\!.
\end{equation}
In this case, there are two nonzero eigenvalues, namely $\frac{4}{3}-\frac{2g_P}{3}$ and  $\frac{g_P}{6}+\frac{7}{3 g_P}$, both of which are strictly positive real functions for the relevant values of $P$. Furthermore, the identities $\tr[\bm{F}_1\widetilde{\bm{Z}}]=2\tr[\bm{F}_2\widetilde{\bm{Z}}]=2\tr[\bm{F}_3\widetilde{\bm{Z}}]=-2$ and $\tr[\bm{F}_0\widetilde{\bm{Z}}]=N_{{\rm X},P}^{(\text{deg})}$ can be easily seen to hold in this case as well, thus confirming the optimality of $(\widetilde{\lambda}_1,\widetilde{\lambda}_2,\widetilde{\lambda}_3)$ in the entire purity domain.

Finally, we note that $\widetilde{\lambda}$ given in Eq.~(\ref{eq:lambdasdeg}) is simply $(1-\widetilde{\lambda}_1-\widetilde{\lambda}_2-\widetilde{\lambda}_3)/3$ (normalization) and that Eq.~(\ref{eq:rhoXPdeg}) is simply Eq.~(\ref{eq:egXMEMSwrtspec}) with $\lambda_4$, $\lambda_5$, and $\lambda_6$ set to $\widetilde{\lambda}$ and $\lambda_1$, $\lambda_2$, and $\lambda_3$ set to $\widetilde{\lambda}_1$, $\widetilde{\lambda}_2$, and $\widetilde{\lambda}_3$, respectively.
\twocolumngrid


\end{document}